\documentclass[runningheads]{llncs}
\usepackage{lineno,hyperref}
\usepackage{url}
\usepackage{stmaryrd}
\usepackage{tikz-cd}
\usepackage{yfonts}
\usepackage[utf8]{inputenc}
\usepackage[T1]{fontenc}
\usepackage [autostyle]{csquotes}
\usepackage{caption}
\usepackage{mathrsfs}
\usepackage{amsfonts}
\usepackage{eqnarray}
\usepackage{mathtools}

\usepackage{float}
\usepackage{amssymb}
\usepackage{amsmath}
\usepackage{enumerate}
\usepackage{lipsum}
\usepackage{graphicx}
\newtheorem{thm}{Theorem}
\newtheorem{lem}{Lemma}
\newtheorem{prop}{Proposition}
\newtheorem{cor}{Corollary}
\newtheorem{defn}{Definition}

\newtheorem{rem}{Remark}

\begin{document}
		\title{A Lindstr\"om Theorem for Fitting's Modal Logic over a
			Finite Heyting Algebra}
			\titlerunning{A Lindstr\"om Theorem for Fitting's Modal Logic..}
				\author{Litan Kumar Das\inst{1}\orcidID{0000-0002-0315-0974}}
			\authorrunning{Litan Kumar Das}
			%
			\institute{Department of Mathematics, Mahishadal Raj College, Mahishadal,
				721628, West Bengal, India\\
				\email{ld06iitkgp@gmail.com}}
		\maketitle
		\begin{abstract}
			We establish a Lindstr\"om-style maximality theorem for
			Maruyama's exact-truth-test presentation of Fitting's modal logic
			over a fixed finite Heyting algebra and crisp Kripke frames.
			Unlike the existing characterization over finite MTL-chains, no
			linearity or distinguished coatom is assumed. Exact truth tests
			yield Boolean tests for designated and non-designated values and
			a derived existential modality sufficient for the saturation
			argument. We prove that every abstract extension which is compact,
			has the Tarski Union Property, and is strongly invariant under
			bisimulation is $1$-expressively equivalent to Maruyama's version of Fitting's Heyting-valued modal logic. As a consequence, every exact-value fibre of an extension formula
			is definable in Maruyama's exact-truth-test modal language.
				\keywords{many-valued modal logic \and finite Heyting algebra \and Lindstr\"om theorem \and 
				bisimulation \and compactness \and Tarski Union Property}
		\end{abstract}
	%
	%
	%
	%
	%
	%
	%
		
\section{Introduction}
\label{sec:introduction}
Many-valued modal logics combine modal reasoning with evaluations
taking values in non-Boolean algebras. Fitting's early work
developed Kripke semantics for modal languages whose formulas take
values in a fixed finite algebra~\cite{fitting1991many}. Maruyama later
studied an algebraic presentation over a finite distributive
lattice $\mathcal L$, hence a finite Heyting algebra, and established
completeness with respect to $\mathcal L$-valued Kripke semantics
together with a finite model property~\cite{maruyama2009algebraic}.
A distinctive feature of this presentation is the use of unary
exact truth tests $T_\ell$, $\ell\in\mathcal L$, in place of
primitive truth constants for all non-Boolean values. Thus
$T_\ell\varphi$ has value $1$ precisely when $\varphi$ has value
$\ell$. Subsequent duality results provide, among other
consequences, compactness for the corresponding finite-valued
logics~\cite{maruyama2011dualities}.\\
Hennessy--Milner phenomena for many-valued modal logics over crisp
Kripke frames were studied by Marti and
Metcalfe~\cite{marti2014hennessy}. Their work relates modal
equivalence and bisimulation under suitable chain-based algebraic
conditions. In the present paper, a saturated-model
Hennessy--Milner argument is used instead over a finite Heyting
algebra.
Lindstr\"om-type theorems characterize the expressive strength of
a logic through model-theoretic properties such as compactness,
bisimulation invariance, and the Tarski Union Property; see, for
example, the classical modal result of de
Rijke~\cite{deRijke1995}. Lindstr\"om-type maximality results have also been developed in non-classical settings, including intuitionistic propositional logic~\cite{BadiaOlkhovikovIPL2020} and graded first-order
logic~\cite{BadiaNoguera2021}.

 More directly relevant here, Badia
and Olkhovikov proved a modal Lindstr\"om theorem over a fixed
finite MTL-chain with truth constants~\cite{badia2020lindstrom}.
However, no corresponding Lindstr\"om maximality result is
available for Maruyama's exact-truth-test presentation of
Fitting's modal logic when the fixed finite Heyting algebra is not
assumed to be linearly ordered. The finite-chain argument of
Badia and Olkhovikov uses the unique immediate predecessor of $1$
to separate designated from non-designated values, together with
linearity in the maximality argument. Hence that proof does not
extend directly to the present setting. Related exact-value and
frame-definability results for more general finite algebras
\cite{badia2023frame} address a different problem and do
not yield maximality for arbitrary abstract extensions.\\
The purpose of the present paper is to show that the exact truth
tests available in the Maruyama's version of Fitting's Heyting-valued modal language provide the
separation mechanism required to remove the chain assumption from
the Lindstr\"om argument. For an arbitrary formula $\varphi$, put $D\varphi:=T_1\varphi,\; N\varphi:=T_0(T_1\varphi)$. Both formulas are Boolean-valued: $D\varphi=1\;\Longleftrightarrow\;
\varphi=1$, whereas $N\varphi=1\;\Longleftrightarrow\;\varphi\neq1$. The same truth tests also define the derived modality $\Diamond_1\varphi :=T_0\bigl(\Box T_0(T_1\varphi)\bigr)$, which expresses the existence of an accessible state at which
$\varphi$ has designated value $1$. The novelty is the use of this exact-test
separation mechanism to replace the chain-dependent coatom
argument in exact-value recovery, saturation, and the
Lindstr\"om separation step.

Our main theorem states that, for every fixed finite Heyting
algebra $\mathcal L$, the Maruyama's version of Fitting's Heyting-valued modal logic is
maximal, with respect to $1$-expressive power, among its abstract
extensions satisfying compactness, the Tarski Union Property, and
strong bisimulation invariance. The decisive replacement for the
finite-chain argument is the Boolean decision pair $T_1\psi\;\text{and}\; T_0(T_1\psi)$, which avoids the use of an immediate predecessor of $1$ or any comparability assumption on the elements of $\mathcal L$.\\
The paper is organized as follows.
Section~\ref{sec:base} introduces the Fitting--Maruyama base
logic and records the model-theoretic properties required later.
Section~\ref{sec:abstract} introduces abstract extensions,
exact-value recovery, successor types, and saturation.
Section~\ref{sec:saturated} constructs saturated elementary
extensions using Compactness and the Tarski Union Property.
Section~\ref{sec:lindstrom} establishes the Lindstr\"om maximality
theorem and derives exact-value definability as a consequence.
The final section concludes the paper and discusses directions for further work.

\section{Maruyama's version of Fitting's modal logic}
\label{sec:base}
We first recall the fragment of Maruyama's version of Fitting's
$\mathcal L$-valued modal logic and its crisp Kripke semantics
needed in the sequel; see
\cite{fitting1991many,maruyama2009algebraic,maruyama2011dualities}.
Throughout the paper, $\mathcal L=(L,\wedge,\vee,\to,0,1)$ denotes a fixed finite Heyting algebra. No linearity assumption is made on its lattice order. We write $\mathsf{FML}_{\mathcal L}$ for Maruyama's version of Fitting's $\mathcal L$-valued modal logic \cite{maruyama2009algebraic}. Let $\tau$ be a set of propositional variables. The formulas of $\mathsf{FML}_{\mathcal L}$ are generated by
\[
\varphi ::=p\mid0\mid1
\mid(\varphi\wedge\varphi)
\mid(\varphi\vee\varphi)
\mid(\varphi\to\varphi)
\mid T_\ell\varphi
\mid\Box\varphi,
\qquad \ell\in L.
\]
For each $\ell\in L$, the exact truth test $T_\ell$ is interpreted
on $\mathcal L$ by
\[
T_\ell(a)=
\begin{cases}
	1,&a=\ell,\\
	0,&a\neq\ell.
\end{cases}
\]
An $\mathcal L$-valued Kripke model for $\mathsf{FML}_{\mathcal L}$ is a triple $M=(W,R,V)$,
where $W\neq\varnothing$, $R\subseteq W\times W$ is a crisp
relation and $V:\tau\longrightarrow L^W$. The propositional connectives are evaluated pointwise in
$\mathcal L$, while
$\llbracket T_\ell\varphi\rrbracket_M(x)
=
T_\ell\bigl(\llbracket\varphi\rrbracket_M(x)\bigr)$ and
$\llbracket\Box\varphi\rrbracket_M(x)
=
\bigwedge\{
\llbracket\varphi\rrbracket_M(y):xRy
\}$, where $\bigwedge\varnothing=1$. We write $M,x\models\varphi$ when $\llbracket\varphi\rrbracket_M(x)=1$.
\subsection{Boolean tests and a derived existential modality}
For the Lindstr\"om argument, we introduce the following two
Boolean tests:
for every formula $\varphi$ define $D\varphi:=T_1\varphi, N\varphi:=T_0(T_1\varphi)$.
\begin{lem}
	\label{lem:boolean}
	For every pointed model $(M,x)$,
	\[
	\llbracket D\varphi\rrbracket_M(x)
	=
	\begin{cases}
		1,&\llbracket\varphi\rrbracket_M(x)=1,\\
		0,&\llbracket\varphi\rrbracket_M(x)\neq1,
	\end{cases}
	\]
	and
	\[
	\llbracket N\varphi\rrbracket_M(x)
	=
	\begin{cases}
		0,&\llbracket\varphi\rrbracket_M(x)=1,\\
		1,&\llbracket\varphi\rrbracket_M(x)\neq1.
	\end{cases}
	\]
	Consequently, $D\varphi$ and $N\varphi$ are $\{0,1\}$-valued, $D\varphi\vee N\varphi$ is valid, and
	$D\varphi\wedge N\varphi$ has value $0$ everywhere.
\end{lem}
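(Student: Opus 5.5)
The plan is to unfold the definitions of $D$ and $N$ and compute directly from the semantic clause $\llbracket T_\ell\varphi\rrbracket_M(x)=T_\ell(\llbracket\varphi\rrbracket_M(x))$. Fix a pointed model $(M,x)$ and write $a:=\llbracket\varphi\rrbracket_M(x)\in L$.

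\textbf{Value of $D\varphi$.} Since $D\varphi=T_1\varphi$, its value is $T_1(a)$. By the definition of $T_\ell$ with $\ell=1$, this is $1$ if $a=1$ and $0$ otherwise. This is exactly the first displayed case distinction.

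\textbf{Value of $N\varphi$.} Here $N\varphi=T_0(T_1\varphi)$, so its value is $T_0(T_1(a))$. I would split into two cases.
\begin{itemize}
\item If $a=1$, then $T_1(a)=1$, and $T_0(1)=0$ because $1\neq 0$.
\item If $a\neq1$, then $T_1(a)=0$, and $T_0(0)=1$.
\end{itemize}
The only subtle point is that the first case uses $0\neq1$ in $L$. So I will note that $\mathcal L$ is assumed nontrivial, as is implicit in the Heyting-algebra setting. If $0=1$ the lemma degenerates: $L$ then has a single element, every formula has value $1$, and the second clause is vacuous in a different way. I will state the assumption $0\neq 1$ rather than treat this case.

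\textbf{Consequences.} From the two computations, both $D\varphi$ and $N\varphi$ take values only in $\{0,1\}$. Moreover, at every point exactly one of them equals $1$ and the other equals $0$, according to whether $a=1$ or not. Evaluating pointwise in $\mathcal L$ then gives $1\vee0=0\vee1=1$ and $1\wedge0=0\wedge1=0$. Hence $D\varphi\vee N\varphi$ has value $1$ at every point of every model, so it is valid, and $D\varphi\wedge N\varphi$ has value $0$ everywhere.

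No step is a real obstacle; the only care needed is the nondegeneracy $0\neq1$.
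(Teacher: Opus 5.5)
Your proof is correct and matches the paper's argument: the paper says only that the lemma follows immediately from the interpretations of $T_1$ and $T_0$, and your case computation of $T_1(a)$ and $T_0(T_1(a))$ spells this out. Your caveat $0\neq 1$ is harmless but unnecessary: in the one-element algebra every value equals $0=1$, so all the stated clauses hold trivially.
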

\begin{proof}
	This follows immediately from the interpretations of $T_1$ and
	$T_0$.
\end{proof}
We further define the following derived existential modality: $\Diamond_1\varphi
:=
T_0\bigl(\Box N\varphi\bigr)
=
T_0\bigl(\Box T_0(T_1\varphi)\bigr)$.
\begin{lem}
	\label{lem:diamond}
	For every model $M$ and state $x$, $M,x\models\Diamond_1\varphi$ if and only if there exists $y$ such that $xRy
	\;\text{and}\;
	M,y\models\varphi$.
\end{lem}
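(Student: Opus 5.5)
The plan is to unfold the definition $\Diamond_1\varphi = T_0(\Box N\varphi)$ and evaluate it using Lemma~\ref{lem:boolean} together with the semantics of $\Box$. By definition, $M,x\models\Diamond_1\varphi$ means $\llbracket T_0(\Box N\varphi)\rrbracket_M(x)=1$. By the interpretation of $T_0$, this holds exactly when $\llbracket\Box N\varphi\rrbracket_M(x)=0$. So the whole argument reduces to computing when the box of a Boolean-valued formula takes the value $0$.

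First I would record that $\llbracket\Box N\varphi\rrbracket_M(x)=\bigwedge\{\llbracket N\varphi\rrbracket_M(y): xRy\}$. By Lemma~\ref{lem:boolean}, every element of this set lies in $\{0,1\}$. A meet of a subset of $\{0,1\}$ in $\mathcal L$ is therefore $1$ if the set is empty or consists only of $1$'s, and $0$ otherwise. This is the one point where some care is needed: the meet is taken in a possibly non-linear lattice. It causes no difficulty, however, because $0$ and $1$ are the bottom and top elements, so $0\wedge a=0$ and $1\wedge a=a$ for every $a$. The set is finite-valued, since $L$ is finite, so the meet is a finite meet. Hence $\llbracket\Box N\varphi\rrbracket_M(x)=0$ if and only if there is some $y$ with $xRy$ and $\llbracket N\varphi\rrbracket_M(y)=0$. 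This also covers the case of no successors, where $\bigwedge\varnothing=1\neq 0$.

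Finally, by Lemma~\ref{lem:boolean} again, $\llbracket N\varphi\rrbracket_M(y)=0$ if and only if $\llbracket\varphi\rrbracket_M(y)=1$, that is, $M,y\models\varphi$. Chaining these equivalences gives the claim: $M,x\models\Diamond_1\varphi$ if and only if there exists $y$ with $xRy$ and $M,y\models\varphi$. I do not expect a real obstacle here. The only step worth stating explicitly is the computation of a meet of $\{0,1\}$-values in a non-linear $\mathcal L$ described above.
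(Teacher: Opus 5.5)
Your proof is correct and follows the paper's argument: use $T_0$ to reduce $M,x\models\Diamond_1\varphi$ to $\llbracket\Box N\varphi\rrbracket_M(x)=0$; since $N\varphi$ is $\{0,1\}$-valued, the meet over successors is $0$ exactly when some successor gives $N\varphi$ value $0$; and that happens exactly when $\varphi$ has value $1$ there. Your extra remarks on meets of $\{0,1\}$-values in a non-linear lattice and on the empty-successor case only spell out what the paper leaves implicit (that case tacitly uses $0\neq1$, as the paper also does).
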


\begin{proof}
	The values of $N\varphi$ belong to $\{0,1\}$, and
	$N\varphi$ has value $0$ exactly at states at which $\varphi$
	has value $1$. Hence $\llbracket\Box N\varphi\rrbracket_M(x)=0$ exactly when some $R$-successor of $x$ gives $N\varphi$ value $0$. Applying $T_0$ proves the assertion.
\end{proof}
\subsection{Basic model-theoretic properties}
We use the standard notion of bisimulation for many-valued
Kripke models with crisp accessibility; cf.\
\cite{marti2014hennessy,badia2020lindstrom}.
A bisimulation between $M=(W_M,R_M,V_M)\;\text{and}\;K=(W_K,R_K,V_K)$ is a relation $Z\subseteq W_M\times W_K$ such that, whenever $xZy$,
\begin{enumerate}[(i)]
	\item
	$V_M(p)(x)=V_K(p)(y)$ for every $p\in\tau$;
	\item
	if $xR_Mx'$, then some $y'\in W_K$ satisfies
	$yR_Ky'$ and $x'Zy'$;
	\item
	if $yR_Ky'$, then some $x'\in W_M$ satisfies
	$xR_Mx'$ and $x'Zy'$.
\end{enumerate}
We write $(M,x)\sim(K,y)$ when a bisimulation relates $x$ and
$y$.
\begin{prop}
	\label{prop:base-bisim}
	The logic $\mathsf{FML}_{\mathcal L}$ is strongly invariant under
	bisimulation: if $(M,x)\sim(K,y)$, then $\llbracket\varphi\rrbracket_M(x)
	=
	\llbracket\varphi\rrbracket_K(y)$ for every $\varphi\in\mathsf{FML}_{\mathcal L}$.
\end{prop}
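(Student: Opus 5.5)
We argue by induction on the construction of $\varphi$, proving the stronger, uniform statement: for every bisimulation $Z$ between $M$ and $K$ and every pair $xZy$, we have $\llbracket\varphi\rrbracket_M(x)=\llbracket\varphi\rrbracket_K(y)$. The quantification over all related pairs is what makes the modal step go through, since the induction hypothesis must be applied at successor pairs rather than only at the original pair $(x,y)$.

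The base cases are immediate. For a propositional variable $p$, clause (i) of the definition of bisimulation gives $V_M(p)(x)=V_K(p)(y)$. The constants $0$ and $1$ take the same value everywhere.

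For the propositional connectives $\wedge,\vee,\to$ and the exact truth tests $T_\ell$, the value at a state is obtained by applying a fixed operation on $\mathcal L$ to the values of the immediate subformulas at that same state. The induction hypothesis supplies equal arguments at $x$ and $y$, so the results agree. In particular, for $T_\ell\psi$ we get $T_\ell(\llbracket\psi\rrbracket_M(x))=T_\ell(\llbracket\psi\rrbracket_K(y))$. No property of $\mathcal L$ beyond $T_\ell$ being a function $L\to L$ is needed, so linearity plays no role.

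The only substantive case is $\Box\psi$. Put
\[
A=\{\llbracket\psi\rrbracket_M(x'):xR_Mx'\},\qquad B=\{\llbracket\psi\rrbracket_K(y'):yR_Ky'\}.
\]
We show $A=B$ as subsets of $L$.
\begin{itemize}
\item[$A\subseteq B$:] If $a=\llbracket\psi\rrbracket_M(x')$ with $xR_Mx'$, then clause (ii) yields $y'$ with $yR_Ky'$ and $x'Zy'$. The induction hypothesis for $\psi$, applied to the pair $(x',y')$, gives $a=\llbracket\psi\rrbracket_K(y')\in B$.
\item[$B\subseteq A$:] This is symmetric, using clause (iii).
\end{itemize}
Hence $\bigwedge A=\bigwedge B$. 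This also covers the case with no successors: $A=\varnothing$ holds exactly when $B=\varnothing$, and then both meets equal $1$ by the convention $\bigwedge\varnothing=1$. The meets exist because $L$ is finite, hence a complete lattice. The main point to watch here is exactly the need for the uniform induction hypothesis at successor pairs, which the stronger statement provides. The proposition then follows by taking the bisimulation witnessing $(M,x)\sim(K,y)$.
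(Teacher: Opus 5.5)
Your proposal is correct and follows the paper's own proof. The paper also argues by induction on $\varphi$, treats the propositional and $T_\ell$ cases as immediate, and handles $\Box\psi$ by using the forth and back clauses to show that the two sets of successor values coincide, then taking meets. Your strengthened induction hypothesis, quantified over all pairs related by the bisimulation, and your remark on the empty-successor case only make explicit what the paper leaves implicit.
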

\begin{proof}
	The proof is by induction on $\varphi$. The propositional and
	$T_\ell$ cases are immediate. For $\Box\psi$, the forth and back
	conditions, together with the induction hypothesis, show that $\{
	\llbracket\psi\rrbracket_M(x'):xRx'
	\}
	=
	\{
	\llbracket\psi\rrbracket_N(y'):yR'y'
	\}$. Taking meets yields the desired equality.
\end{proof}
Compactness of $\mathsf{FML}_{\mathcal L}$ follows from
Maruyama's duality theory~\cite{maruyama2011dualities}. 
\begin{prop}[Compactness]
	\label{prop:base-compact}
	If every finite subset of $\Gamma\subseteq\mathsf{FML}_{\mathcal L}(\tau)$ is satisfiable at value $1$, then $\Gamma$ is satisfiable at value $1$.
\end{prop}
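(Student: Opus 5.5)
The plan is to prove compactness directly, by reduction to classical propositional compactness, using a \emph{canonical-model construction} rather than relying on the duality theory. Fix $\tau$ and put $S=\mathsf{FML}_{\mathcal L}(\tau)$. Because $\mathcal L$ is finite, a pointed model $(M,x)$ induces a ``value type'' $t_x:S\to L$ with $t_x(\varphi)=\llbracket\varphi\rrbracket_M(x)$. The strategy is to encode such value assignments as valuations of a classical propositional language. Introduce classical atoms $q_{\varphi,\ell}$ for $\varphi\in S$ and $\ell\in L$, with intended meaning ``$\varphi$ has value $\ell$''. Let $\Theta$ be the following set of classical sentences:
\begin{enumerate}[(a)]
\item exactly one of $q_{\varphi,\ell}$ ($\ell\in L$) holds, for each $\varphi$;
\item the local truth-table clauses, e.g.\ $q_{\psi,a}\wedge q_{\chi,b}\to q_{\psi\wedge\chi,a\wedge b}$, similar clauses for $\vee,\to$, the clause $q_{\psi,a}\to q_{T_\ell\psi,T_\ell(a)}$, and $q_{0,0}$, $q_{1,1}$;
\item the target condition $q_{\gamma,1}$ for each $\gamma\in\Gamma$.
\end{enumerate}
Each finite subset of $\Theta$ is satisfiable, because a model of the corresponding finite part of $\Gamma$ yields the evaluation $t_x$. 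So classical compactness already gives a ``local'' value assignment. The remaining work is to treat $\Box$.

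To handle $\Box$, I will build the canonical model directly. Let $W$ be the set of all functions $t:S\to L$ that arise as $t_x$ for some pointed model, that is, all realised value types. This is a set, since it is contained in $L^S$. Define $tRu$ iff $u(\psi)\ge t(\Box\psi)$ for all $\psi$, and $V(p)(t)=t(p)$. The truth lemma to prove by induction is $\llbracket\varphi\rrbracket(t)=t(\varphi)$. The Boolean and $T_\ell$ cases are immediate. For $\Box\psi$, the inequality $\bigwedge\{u(\psi):tRu\}\ge t(\Box\psi)$ holds by definition of $R$. The reverse inequality requires a successor $u$ with $u(\psi)=t(\Box\psi)$ (or several successors whose meet is $t(\Box\psi)$). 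I would obtain these by compactness in the type space.

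Concretely, let $t=t_x$ in $M$, and set $a=t(\Box\psi)$. For each finite set $F\subseteq S$, the meet $a=\bigwedge_{xRy}\llbracket\psi\rrbracket(y)$ is attained as a finite meet, since $L$ is finite. Hence there are finitely many successors $y_1,\dots,y_k$ with $\bigwedge_i\llbracket\psi\rrbracket(y_i)=a$, and since $L$ is finite, $k\le|L|$ can be uniformly bounded. Each of these successors has a type lying above $t(\Box\cdot)$. Because types are ``closed'' (item (b) below), a limit argument yields $k$ realised types $u_1,\dots,u_k$ with $tRu_i$ and $\bigwedge u_i(\psi)=a$.

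The main obstacle is this closedness: I need that the set $W$ of realised types is closed in the product topology on $L^S$, which is compact by Tychonoff since $L$ is finite and discrete. Closedness here amounts to the original compactness statement, so the argument risks circularity. My first attempt to break the circle is to replace $W$ by the set of ``finitely realisable'' types: those $t$ such that every finite restriction $t|_F$ is realised by some pointed model. This set is closed by construction, and each element of $\Gamma$-finitely-satisfiable data yields such a $t$ by Tychonoff. The truth lemma for $\Box$ then needs only finite approximations: for finite $F\ni\Box\psi$ extended by the $\Box\chi$ with $\chi\in F$, realise $t|_{F'}$ at some $x$, pick witnesses $y_i$ as above, and record their finite types. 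Letting $F$ grow and using compactness of $(L^{S})^{|L|}$ gives finitely realisable $u_i$ with $tRu_i$ and $\bigwedge u_i(\psi)=a$. Checking that $tRu_i$ survives the limit (the conditions $u(\chi)\ge t(\Box\chi)$ are closed) completes the truth lemma. Then any finitely realisable $t$ with $t(\gamma)=1$ for all $\gamma\in\Gamma$, which exists by Tychonoff from finite satisfiability of $\Gamma$, satisfies $\Gamma$ at value $1$ in the canonical model.
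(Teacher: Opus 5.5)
Your argument is correct, but it takes a genuinely different route from the paper. The paper gives no proof of its own: it obtains compactness as a by-product of Maruyama's duality theory \cite{maruyama2011dualities}.

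You instead give a self-contained model-theoretic proof. The set $W\subseteq L^S$ of finitely realisable types is closed, hence compact by Tychonoff. With the canonical relation $tRu$ iff $u(\psi)\ge t(\Box\psi)$ for all $\psi$, the inequality $\llbracket\Box\psi\rrbracket(t)\ge t(\Box\psi)$ is automatic. The reverse inequality follows from a finite-intersection argument in $W^{|L|}$, because a meet over successors in the finite lattice $\mathcal L$ is already attained by at most $|L|$ of them. This correctly handles the non-linear case, where no single successor need carry the value $t(\Box\psi)$.

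What your route buys is independence from any axiomatization, and more generality. It uses only the finiteness of $L$ and the meet semantics of $\Box$ over crisp successors, never the Heyting operations or the tests $T_\ell$. The duality route places compactness inside the algebraic theory of the logic, alongside its completeness results, but leaves all the work to the cited source. An even shorter alternative in your spirit is the standard translation into first-order logic with unary predicates $P_{p,\ell}$. Every value fibre of an $\mathsf{FML}_{\mathcal L}$-formula is first-order definable there, so first-order compactness finishes the proof.

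For the write-up, a few clean-ups:
\begin{enumerate}[(i)]
\item Discard the preliminary encoding $\Theta$ and the abandoned attempt with realised types; the reference to ``item (b) below'' is dangling.
\item Treat $t(\Box\psi)=1$ separately, since then no witnesses are needed.
\item For $t(\Box\psi)\neq 1$, note that the realising point has at least one successor, so the witness tuple can be padded by repetition to length exactly $|L|$.
\item For each finite $F$, record the full realised types $t_{y_i}\in W$, not their restrictions to $F$. Then the cluster point lies in $W^{|L|}$ and visibly satisfies the closed conditions $u_i(\chi)\ge t(\Box\chi)$ and $\bigwedge_i u_i(\psi)=t(\Box\psi)$.
\end{enumerate}
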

We next verify the Tarski Union Property in its standard
elementary-chain form; cf.~\cite{badia2020lindstrom,BadiaNoguera2021}.
For models $M\subseteq K$, write $M\preccurlyeq_{\mathcal L}K$ if, for every $x\in W_M$ and every $\varphi\in\mathsf{FML}_{\mathcal L}(\tau)$, $\llbracket\varphi\rrbracket_M(x)
=
\llbracket\varphi\rrbracket_K(x)$. For an increasing chain $M_n=(W_{M_n},R_{M_n},V_{M_n})\;(n<\omega)$, its union is the model $M=(W_M,R_M,V_M)$, where $W_M=\bigcup_{n<\omega}W_{M_n},\;R_M=\bigcup_{n<\omega}R_{M_n}$, and, for $x\in W_{M_n}$, $V_M(p)(x)=V_{M_n}(p)(x)$. This is well defined: if
$x\in W_{M_n}\cap W_{M_m}$, say $n\leq m$, then
$M_n\subseteq M_m$, and hence $V_{M_n}(p)(x)=V_{M_m}(p)(x)$.
\begin{prop}[Tarski Union Property]
	\label{prop:base-tup}
	Suppose $M_0\preccurlyeq_{\mathcal L}M_1\preccurlyeq_{\mathcal L}\cdots$, is a countable elementary chain and put $M=\bigcup_{n<\omega}M_n$. Then $M_n\preccurlyeq_{\mathcal L}M
	\;(n<\omega)$.
	\end{prop}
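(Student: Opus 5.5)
We prove, by induction on the structure of $\varphi\in\mathsf{FML}_{\mathcal L}(\tau)$, the following claim: for every $n<\omega$ and every $x\in W_{M_n}$,
\[
\llbracket\varphi\rrbracket_{M_n}(x)=\llbracket\varphi\rrbracket_M(x).
\]
The claim is stated simultaneously for all $n$, so that the induction hypothesis can be applied at different levels of the chain. First we record two simple facts. Since $\preccurlyeq_{\mathcal L}$ requires in particular $M_n\subseteq M_{m}$ for $n\le m$, each $M_n$ is a submodel of $M$, and $V_M$ agrees with $V_{M_n}$ on $W_{M_n}$. Also, by transitivity of the elementary chain, $M_n\preccurlyeq_{\mathcal L}M_m$ for all $n\le m$: value equality composes along finitely many steps.

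The base cases $p$, $0$, $1$ hold by the definition of $V_M$. The cases $\wedge$, $\vee$, $\to$ and $T_\ell$ are immediate, because these connectives are evaluated pointwise from the values of the immediate subformulas at the same state, and the induction hypothesis gives those values.

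The only substantive case is $\Box\psi$, and here we must not assume that $M_n$ is a generated submodel. Fix $x\in W_{M_n}$. The set $\{y : xR_My\}$ is the increasing union over $m\ge n$ of the sets $S_m=\{y: xR_{M_m}y\}$. By the induction hypothesis applied at level $m$, each $y\in S_m$ has $\llbracket\psi\rrbracket_M(y)=\llbracket\psi\rrbracket_{M_m}(y)$. Therefore
\[
\llbracket\Box\psi\rrbracket_M(x)=\bigwedge_{m\ge n}\bigwedge\{\llbracket\psi\rrbracket_{M_m}(y):y\in S_m\}=\bigwedge_{m\ge n}\llbracket\Box\psi\rrbracket_{M_m}(x).
\]
By elementarity $M_n\preccurlyeq_{\mathcal L}M_m$, every term $\llbracket\Box\psi\rrbracket_{M_m}(x)$ equals $\llbracket\Box\psi\rrbracket_{M_n}(x)$. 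The meet of a constant family is that constant, which gives the required equality. The meet over $m$ is taken in the finite lattice $\mathcal L$, so it is attained; and if $x$ has no successors at all, every term is $1$.

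The main obstacle is exactly this $\Box$ step. New successors of $x$ may appear at later stages, and the argument handles them by combining two ingredients: elementarity at every level $m$, applied to $\Box\psi$ itself, and the induction hypothesis for $\psi$ at every level. This is why the induction is phrased uniformly in $n$, rather than proving $M_n\preccurlyeq_{\mathcal L}M$ one $n$ at a time. No linearity of $\mathcal L$ is used, only the fact that meets are associative over the union $\bigcup_m S_m$.
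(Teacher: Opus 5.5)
Your proof is correct and follows essentially the same route as the paper: a simultaneous induction over all levels of the chain. In the $\Box$ case you combine the induction hypothesis for $\psi$ at a later level $m\ge n$ with elementarity $M_n\preccurlyeq_{\mathcal L}M_m$ applied to $\Box\psi$, which is exactly what the paper does. The only difference is cosmetic: you write $\llbracket\Box\psi\rrbracket_M(x)$ as $\bigwedge_{m\ge n}\llbracket\Box\psi\rrbracket_{M_m}(x)$ by associativity of meets, whereas the paper proves the two inequalities $\le$ and $\ge$ separately.
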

\begin{proof}
	We prove by induction on $\varphi$ that, for every $i<\omega$
	and every $x\in W_{M_i}$, $\llbracket\varphi\rrbracket_{M_i}(x)
	=\llbracket\varphi\rrbracket_M(x)$.
	The non-modal cases are immediate. Let $\varphi=\Box\psi$ and put
	$a=\llbracket\Box\psi\rrbracket_{M_i}(x)$. Since $R_{M_i}[x]\subseteq R_M[x]$, the induction hypothesis gives $\llbracket\Box\psi\rrbracket_M(x)\leq a$. Conversely, let $xR_My$. Since $R_M=\bigcup_{n<\omega}R_{M_n}$, choose $j\ge i$ such that $xR_{M_j}y$. As $M_i\preccurlyeq_{\mathcal L}M_j$, $a=
	\llbracket\Box\psi\rrbracket_{M_j}(x)
	\le
	\llbracket\psi\rrbracket_{M_j}(y)$. By the induction hypothesis,
	$\llbracket\psi\rrbracket_{M_j}(y)
	=
	\llbracket\psi\rrbracket_M(y)$.
	Thus $a$ is a lower bound of $\{\llbracket\psi\rrbracket_M(y):xR_My\}$, and hence $a\leq\llbracket\Box\psi\rrbracket_M(x)$. Therefore
	$\llbracket\Box\psi\rrbracket_{M_i}(x)
	=
	\llbracket\Box\psi\rrbracket_M(x)$. Hence $M_i\preccurlyeq_{\mathcal L}M$ for every $i<\omega$.

\end{proof}

\section{Abstract extensions and saturation}
\label{sec:abstract}
To formulate expressive maximality, we adapt the abstract-extension
framework of~\cite{badia2020lindstrom,BadiaNoguera2021} to
$\mathsf{FML}_{\mathcal L}$, with all logics interpreted over the
same class of $\mathcal L$-valued crisp Kripke models.

\begin{defn}
	An \emph{abstract extension} $\mathscr L$ of
	$\mathsf{FML}_{\mathcal L}$ assigns to each signature $\tau$ a
	set $\mathscr L(\tau)$ of formulas satisfying the following
	conditions.
		\begin{enumerate}[(i)]
		\item
		If $\tau\subseteq\tau'$, then
		$\mathscr L(\tau)\subseteq\mathscr L(\tau')$;
		
		\item
		Every $\varphi\in\mathscr L(\tau)$ has a finite occurrence
		signature $\tau_\varphi\subseteq\tau$ such that $\varphi\in\mathscr L(\tau_\varphi)$;
		
		\item $\mathsf{FML}_{\mathcal L}(\tau)\subseteq\mathscr L(\tau)$,
		and $\mathscr L(\tau)$ is closed under $\wedge,\vee,\to,T_\ell\;(\ell\in L),\Box,0,1$;
		\item
	Every $\varphi\in\mathscr L(\tau)$ is assigned a value $\llbracket\varphi\rrbracket_M^{\mathscr L}(x)\in L$ at every pointed $\tau$-model, and  the connectives of $\mathsf{FML}_{\mathcal L}$ retain their usual semantics on arbitrary
	$\mathscr L$-formulas. In particular,
$	\llbracket T_\ell\varphi\rrbracket_M^{\mathscr L}(x)
	=
	T_\ell\!\left(
	\llbracket\varphi\rrbracket_M^{\mathscr L}(x)
	\right)$
	and
$	\llbracket\Box\varphi\rrbracket_M^{\mathscr L}(x)
	=\bigwedge_{xRy}
	\llbracket\varphi\rrbracket_M^{\mathscr L}(y)$;
		\item The interpretation is invariant under isomorphisms of pointed
		models: if $(M,x)$ and $(K,y)$ are isomorphic pointed
		$\tau$-models, then
	$	\llbracket\varphi\rrbracket_M^{\mathscr L}(x)
		=
		\llbracket\varphi\rrbracket_K^{\mathscr L}(y)$, for every $\varphi\in\mathscr L(\tau)$;
		\item The interpretation is invariant under expansion and reduct of
		signatures: if $\tau\subseteq\tau'$, $\varphi\in\mathscr L(\tau)$,
		and $M'=(W',R',V')$ is a $\tau'$-model, then, for every $x\in W'$,
		$\llbracket\varphi\rrbracket_{M'}^{\mathscr L}(x)
		=
		\llbracket\varphi\rrbracket_{M'\upharpoonright\tau}^{\mathscr L}(x)$, where $M'\upharpoonright\tau=(W',R',V'\upharpoonright\tau)$ denotes the $\tau$-reduct of $M'$.
		
	\end{enumerate}
\end{defn}
We write $M,x\models_{\mathscr L}\varphi$ when
$\llbracket\varphi\rrbracket_M^{\mathscr L}(x)=1$, and define
$\operatorname{Th}_{\mathscr L}(M,x)
=
\{
\varphi\in\mathscr L(\tau):
M,x\models_{\mathscr L}\varphi
\}$. For $\Gamma\subseteq\mathscr L(\tau)$, write $M,x\models_{\mathscr L}\Gamma
\iff
M,x\models_{\mathscr L}\gamma
\text{ for every }\gamma\in\Gamma$. For $M\subseteq K$, write $M\preccurlyeq_{\mathscr L}K$ if
$\operatorname{Th}_{\mathscr L}(M,x)
=
\operatorname{Th}_{\mathscr L}(K,x) \; (x\in M)$.
Compactness and the Tarski Union Property for $\mathscr L$ are
defined as in Propositions~\ref{prop:base-compact} and ~\ref{prop:base-tup}, with $\mathscr L$ in place of
$\mathsf{FML}_{\mathcal L}$.
Following Badia and Olkhovikov~\cite{badia2020lindstrom}, we say
that an abstract extension $\mathscr L$ is \emph{strongly invariant under bisimulation} if
$(M,x)\sim(K,y)$ implies
$\llbracket\varphi\rrbracket_M^{\mathscr L}(x)
=
\llbracket\varphi\rrbracket_K^{\mathscr L}(y)
$ for every $\varphi\in\mathscr L(\tau)$.
Since $\mathscr L(\tau)$ is closed under $T_\ell$ and $\Box$, the abbreviations
$D\varphi=T_1\varphi,\;
N\varphi=T_0(T_1\varphi),\;
\Diamond_1\varphi=T_0(\Box N\varphi)$ are available for every $\varphi\in\mathscr L(\tau)$. Their usual semantics ensures that Lemmas~\ref{lem:boolean} and ~\ref{lem:diamond} remain valid.\\
Two formulas over the same signature $\tau$ are
$1$-equivalent, written $\varphi\equiv_1\psi$, if they have
value $1$ at exactly the same pointed $\tau$-models. We write
$\mathscr L_1\leq_1\mathscr L_2$ if, for every signature $\tau$
and every $\varphi\in\mathscr L_1(\tau)$, there is
$\psi\in\mathscr L_2(\tau)$ such that $\varphi\equiv_1\psi$,
and $\mathscr L_1\equiv_1\mathscr L_2$ if both expressive inclusions hold.

\begin{lem}[Exact-value recovery]
	\label{lem:recovery}
	For pointed models $(M,x)$ and $(K,y)$,
$	\operatorname{Th}_{\mathscr L}(M,x)
	=
	\operatorname{Th}_{\mathscr L}(K,y)$
	if and only if
	$\llbracket\varphi\rrbracket_M^{\mathscr L}(x)
	=
	\llbracket\varphi\rrbracket_K^{\mathscr L}(y)$ for every $\varphi\in\mathscr L(\tau)$.
\end{lem}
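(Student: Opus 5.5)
The plan is to prove the two implications separately. The backward direction is immediate, and the forward direction rests on the closure of $\mathscr L(\tau)$ under the exact truth tests $T_\ell$.

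For the direction from right to left, assume $\llbracket\varphi\rrbracket_M^{\mathscr L}(x)=\llbracket\varphi\rrbracket_K^{\mathscr L}(y)$ for every $\varphi\in\mathscr L(\tau)$. In particular, $\llbracket\varphi\rrbracket_M^{\mathscr L}(x)=1$ holds if and only if $\llbracket\varphi\rrbracket_K^{\mathscr L}(y)=1$. By the definition of $\operatorname{Th}_{\mathscr L}$, this says exactly that $\operatorname{Th}_{\mathscr L}(M,x)=\operatorname{Th}_{\mathscr L}(K,y)$.

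For the direction from left to right, assume the two theories coincide and fix $\varphi\in\mathscr L(\tau)$. Let $\ell:=\llbracket\varphi\rrbracket_M^{\mathscr L}(x)\in L$.
\begin{itemize}
\item By clause (iii) of the definition of abstract extension, $T_\ell\varphi\in\mathscr L(\tau)$.
\item By clause (iv), $\llbracket T_\ell\varphi\rrbracket_M^{\mathscr L}(x)=T_\ell(\ell)=1$, so $T_\ell\varphi\in\operatorname{Th}_{\mathscr L}(M,x)$.
\item Hence $T_\ell\varphi\in\operatorname{Th}_{\mathscr L}(K,y)$, so $T_\ell\bigl(\llbracket\varphi\rrbracket_K^{\mathscr L}(y)\bigr)=1$.
\item By the defining clauses of $T_\ell$, this forces $\llbracket\varphi\rrbracket_K^{\mathscr L}(y)=\ell$.
\end{itemize}
Since $\varphi$ was arbitrary, the values agree on all of $\mathscr L(\tau)$.

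There is no genuine obstacle. The only point needing care is that the test $T_\ell$ must be applied to an arbitrary $\mathscr L$-formula and not only to a base formula. Clauses (iii) and (iv) guarantee exactly this: closure of $\mathscr L(\tau)$ under $T_\ell$, together with the standard semantics of $T_\ell$. This is precisely where the exact-truth-test language replaces the truth constants used in the finite-chain setting. No linearity of $\mathcal L$ is used, since the argument picks the single value $\ell$ directly rather than comparing values.
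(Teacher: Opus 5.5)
Your proof is correct and follows essentially the same route as the paper: the backward direction is immediate, and the forward direction applies the exact test $T_\ell\varphi$ with $\ell=\llbracket\varphi\rrbracket_M^{\mathscr L}(x)$. You transfer it through the equal theories and read off $\llbracket\varphi\rrbracket_K^{\mathscr L}(y)=\ell$. You are also right to cite clauses (iii) and (iv), which guarantee that $T_\ell$ applies, with its standard semantics, to arbitrary $\mathscr L$-formulas.
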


\begin{proof}
	Only the forward implication requires proof. Suppose
	$\llbracket\varphi\rrbracket_M^{\mathscr L}(x)=\ell$.
	Then $M,x\models_{\mathscr L}T_\ell\varphi$. Equality of theories yields $K,y\models_{\mathscr L}T_\ell\varphi$, and therefore $\llbracket\varphi\rrbracket_K^{\mathscr L}(y)=\ell$.
	
\end{proof}
We use the successor-type notion underlying the saturation
argument of~\cite{badia2020lindstrom}, formulated here using the
derived modality $\Diamond_1$.
\begin{defn}
	Let $\mathscr L$ be an abstract extension and let $M=(W_M,R_M,V_M)$ be a
	$\tau$-model. A set $\Gamma\subseteq\mathscr L(\tau)$ is an
	\emph{$\mathscr L$-successor type} at $(M,x)$ if, for every finite
	$\Gamma_0\subseteq\Gamma$, there exists $y\in W_M$ such that
	$xR_My$ and $M,y\models_{\mathscr L}\Gamma_0$. The model $M$ is \emph{$\mathscr L$-saturated} if every
	$\mathscr L$-successor type $\Gamma$ at $(M,x)$ is realized by
	some $y\in W_M$, i.e., $xR_My\;\text{and}\;M,y\models_{\mathscr L}\Gamma$.

\end{defn}
By Lemma~\ref{lem:diamond}, $\Gamma$ is a successor type at $x$ iff
$M,x\models_{\mathscr L}
\Diamond_1\left(\bigwedge\Gamma_0\right)$
for every finite $\Gamma_0\subseteq\Gamma$, where
$\bigwedge\varnothing=1$.
\begin{prop}[Hennessy--Milner property]
	\label{prop:hm}
	Let $M=(W_M,R_M,V_M)$ and $K=(W_K,R_K,V_K)$ be
	$\mathsf{FML}_{\mathcal L}$-saturated models. If
	$\llbracket\varphi\rrbracket_M(x)
	=
	\llbracket\varphi\rrbracket_K(y)$
	for every $\varphi\in\mathsf{FML}_{\mathcal L}(\tau)$, then
	$(M,x)\sim(K,y)$.
\end{prop}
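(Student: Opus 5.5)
We define $Z\subseteq W_M\times W_K$ by letting $x'Zy'$ hold iff $\llbracket\varphi\rrbracket_M(x')=\llbracket\varphi\rrbracket_K(y')$ for every $\varphi\in\mathsf{FML}_{\mathcal L}(\tau)$. By Lemma~\ref{lem:recovery}, specialised to $\mathscr L=\mathsf{FML}_{\mathcal L}$, this is the same as $\operatorname{Th}(M,x')=\operatorname{Th}(K,y')$. By hypothesis $xZy$, so it suffices to show that $Z$ is a bisimulation. Condition (i) is immediate, since each $p\in\tau$ is itself a formula. Conditions (ii) and (iii) are symmetric, so we treat only the forth condition (ii).

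Assume $x'Zy'$ and $x'R_Mx''$, and put $\Gamma:=\operatorname{Th}(M,x'')$. We claim that $\Gamma$ is a successor type at $(K,y')$. Let $\Gamma_0\subseteq\Gamma$ be finite. Then $M,x''\models\bigwedge\Gamma_0$, and Lemma~\ref{lem:diamond} gives $M,x'\models\Diamond_1(\bigwedge\Gamma_0)$. Since $\Diamond_1(\bigwedge\Gamma_0)$ is an $\mathsf{FML}_{\mathcal L}$-formula and $x'Zy'$, we get $K,y'\models\Diamond_1(\bigwedge\Gamma_0)$. Applying Lemma~\ref{lem:diamond} again yields a successor of $y'$ satisfying $\Gamma_0$. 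Because $K$ is $\mathsf{FML}_{\mathcal L}$-saturated, $\Gamma$ is realized at some $y''$ with $y'R_Ky''$. Then $\operatorname{Th}(M,x'')\subseteq\operatorname{Th}(K,y'')$.

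The remaining step is to upgrade this inclusion to equality of values. This is where the exact truth tests do the work that linearity does in the chain case. Take any formula $\varphi$ and let $\ell=\llbracket\varphi\rrbracket_M(x'')$. Then $T_\ell\varphi\in\Gamma$, so $K,y''\models T_\ell\varphi$, and hence $\llbracket\varphi\rrbracket_K(y'')=\ell$. This gives $x''Zy''$ directly, without needing the reverse inclusion of theories, and the argument is the same as the forward direction of Lemma~\ref{lem:recovery}.

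The back condition (iii) follows by the symmetric argument using the saturation of $M$. The only delicate point is the choice of $\Diamond_1$ rather than a primitive diamond. Its Boolean character, given by Lemma~\ref{lem:diamond}, is exactly what lets finite satisfiability of $\Gamma$ among successors be transferred along $Z$ as a single formula taking value $1$.
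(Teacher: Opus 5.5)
Your proof is correct and follows essentially the same route as the paper. You use the relation of $\mathsf{FML}_{\mathcal L}$-equivalence, transfer each $\Diamond_1(\bigwedge\Gamma_0)$ along $Z$ to show that $\operatorname{Th}(M,x'')$ is a successor type at $(K,y')$, and realize it by saturation; the only difference is that the paper upgrades $\Gamma\subseteq\operatorname{Th}(K,y'')$ to equality of theories via the tests $N\psi$, whereas you obtain equality of values directly from $T_\ell\varphi\in\Gamma$, which is just the forward direction of Lemma~\ref{lem:recovery} applied inline.
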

\begin{proof}
	Let $\operatorname{Th}_{\mathrm{FML}}(M,x)
	:=
	\{\varphi\in\mathsf{FML}_{\mathcal L}(\tau):
	M,x\models\varphi\}.$
	Define a relation $Z\subseteq W_M\times W_K$ by
	$uZv
	\;\Longleftrightarrow\;
	\operatorname{Th}_{\mathrm{FML}}(M,u)
	=
	\operatorname{Th}_{\mathrm{FML}}(K,v)$.
	By Lemma \ref{lem:recovery}, $uZv$ implies $\llbracket\varphi\rrbracket_M(u)
	=
	\llbracket\varphi\rrbracket_K(v)$ for every $\varphi\in\mathsf{FML}_{\mathcal L}(\tau)$,
	and hence atomic agreement. Suppose $uZv$ and $uR_Mu'$. Put
	$\Gamma=\operatorname{Th}_{\mathrm{FML}}(M,u')$.
	For every finite $\Gamma_0\subseteq\Gamma$,
	$M,u\models
	\Diamond_1\left(\bigwedge\Gamma_0\right)$. Since $uZv$, we have $K,v\models
	\Diamond_1\Bigl(\bigwedge\Gamma_0\Bigr)$. Hence $\Gamma$ is a successor type at $(K,v)$. Saturation gives
	$vR_Kv'$ with
	\[
	\Gamma
	\subseteq
	\operatorname{Th}_{\mathrm{FML}}(K,v').
	\tag{1}
	\]
	For the reverse inclusion, suppose
$	\psi\notin\operatorname{Th}_{\mathrm{FML}}(M,u')$.
	Then
	$M,u'\models N\psi$.
	Hence $N\psi\in\Gamma$, and by \textup{(1)}
	$K,v'\models N\psi$.
	Thus $\psi$ is non-designated at $v'$, and so
$	\psi\notin\operatorname{Th}_{\mathrm{FML}}(K,v')$. Therefore $\operatorname{Th}_{\mathrm{FML}}(K,v')
\subseteq
\operatorname{Th}_{\mathrm{FML}}(M,u')$.
	Together with \textup{(1)}, this yields
$	\operatorname{Th}_{\mathrm{FML}}(M,u')
	=
	\operatorname{Th}_{\mathrm{FML}}(K,v')$,
	hence $u'Zv'$. This proves Forth. Back is symmetric, thus $Z$ is a bisimulation. Since the
	hypothesis implies $xZy$, we obtain
	$(M,x)\sim(K,y)$.

\end{proof}

\section{Saturated elementary extensions}
\label{sec:saturated}
We adapt the saturation construction of
Badia and Olkhovikov~\cite{badia2020lindstrom} to the present exact-test setting. The general
compactness--unravelling strategy is retained, while the exact
truth tests and the derived modality $\Diamond_1$ provide the
separation and successor-existence mechanisms required below.\\
For an abstract extension $\mathscr L$, a Kripke-model embedding
$f:M\to K$ is called $\mathscr L$-elementary if
$\operatorname{Th}_{\mathscr L}(M,x)
=
\operatorname{Th}_{\mathscr L}(K,f(x))$
for every $x\in W_M$. By Lemma~\ref{lem:recovery}, this is
equivalent to preservation of the exact value of every
$\theta\in\mathscr L(\tau)$.\\
Let $(M,w)$ be a pointed $\tau$-model. Its unravelling $M_w$ consists of all
finite $R_M$-paths starting at $w$, ordered by one-step
extension, with valuations inherited from endpoints; see, e.g.,
\cite{blackburn2001modal,badia2020lindstrom}.
The endpoint map $\pi:M_w\to M$ is a bisimulation.
\begin{prop}[Realization extension]
	\label{prop:realization}
	Let $\mathscr L$ be compact and strongly invariant under
	bisimulation. For every pointed $\tau$-model $(M,w)$, its unravelling
	$U=M_w$ admits an $\mathscr L$-elementary embedding $f:U\longrightarrow K$
	into an unravelled model $K$ such that every
	$\mathscr L$-successor type at $s\in U$ is realized by a
	successor of $f(s)$ in $K$.
\end{prop}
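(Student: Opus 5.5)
We follow the standard elementary-diagram construction from the Badia--Olkhovikov proof. The idea is to add a fresh propositional variable for every state of $U$ and every successor type, and use Compactness of $\mathscr L$ to get one model satisfying the expanded diagram.

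\textbf{Step 1: the elementary diagram.} Let $\tau^+=\tau\cup\{p_s:s\in U\}$ with fresh variables $p_s$. Put
\[
\Delta=\{T_\ell p_s:\ s\in U\text{ and }\ell=1\}\cup\{T_\ell\theta^{(s)} : \theta\in\mathscr L(\tau),\ \llbracket\theta\rrbracket_U^{\mathscr L}(s)=\ell\}.
\]
Here $\theta^{(s)}$ must say ``$\theta$ has value $\ell$ at the state named $s$''. Because $U$ is a tree rooted at $w$, every $s$ is reached from the root by a unique path of length $n_s$. So we evaluate everything at the root and relativize along paths. For $s=\langle w,s_1,\dots,s_n\rangle$ we use the formula
\[
\Diamond_1(Dp_{s_1}\wedge\Diamond_1(Dp_{s_2}\wedge\cdots\Diamond_1(Dp_{s}\wedge T_\ell\theta)\cdots)).
\]
We also add $Dp_w$. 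We add $\Box$-clauses asserting that the $p$-names are functional: any successor of a $p_s$-state satisfying $Dp_{t}$ for $t$ a child of $s$ is unique up to the $\mathscr L$-theory. Exactness is not needed here, since elementarity only requires agreement of values.

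\textbf{Step 2: the successor-type clauses.} For each $s$ and each $\mathscr L(\tau)$-successor type $\Gamma$ at $(U,s)$, take a fresh variable $q_{s,\Gamma}$. Along the path to $s$, add the clause $\Diamond_1(q_{s,\Gamma}\text{-marked state satisfying }\gamma)$ for every $\gamma\in\Gamma$, all within one $\Diamond_1$. That is, we require a single successor $y$ with $Dq_{s,\Gamma}$ and $D\gamma$ for all $\gamma\in\Gamma$. This is expressed by formulas
\[
\Diamond_1 Dq_{s,\Gamma}\quad\text{and}\quad \Box\bigl(T_0(T_1q_{s,\Gamma})\vee D\gamma\bigr),
\]
relativized to $s$.

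\textbf{Step 3: finite satisfiability and the main obstacle.} The main obstacle is showing that every finite subset is satisfiable at the root. A finite subset mentions finitely many states and types, and finitely many $\gamma$ from each type. Interpret it in $U$ itself: let $p_s$ name $s$ with value $1$ and $0$ elsewhere. For each $\Gamma$, the definition of a successor type supplies a successor $y$ of $s$ realizing the finitely many chosen $\gamma$. Set $q_{s,\Gamma}$ to $1$ exactly at $y$. Distinct types at the same $s$ may pick the same $y$, which is harmless.

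There is a subtlety: the new variables change the signature but not the $\mathscr L(\tau)$-values. This is clause (vi) of the definition, reduct invariance. Hence the diagram clauses still hold in the expanded model, and Compactness of $\mathscr L$ gives a pointed model $(K',r)$ of $\Delta$.

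\textbf{Step 4: building $K$ and $f$.} Unravel $K'$ at $r$ to get $K$. By strong bisimulation invariance, all $\mathscr L$-values are preserved. Define $f(s)$ as the path following the $p$-witnesses. A cleaner option, avoiding functionality worries, is to choose witnesses inductively along the tree. Each witness has the correct $\mathscr L(\tau)$-values by the $T_\ell\theta$ clauses, so $f$ is $\mathscr L$-elementary by Lemma~\ref{lem:recovery}.

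Injectivity follows because the $p_s$ are mutually exclusive. This needs the additional clauses $N p_t$ at $p_s$-states for $t\neq s$, which are satisfiable in $U$. Preservation of $R$ holds by construction.

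Finally, the $q_{s,\Gamma}$-witness successor of $f(s)$ realizes $\Gamma$. Then take the $\tau$-reduct, again using clause (vi).
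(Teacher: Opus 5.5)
\textbf{Genuine gap: how you tie formulas to the state named $s$.}
Your overall architecture matches the paper's. You name states by fresh variables, use a fresh $q_{s,\Gamma}$ to pin down one witness per successor type, check finite satisfiability in an expansion of $U$, apply compactness, unravel, and build $f$ by induction on depth. The failure is in how you relativize formulas to the state named $s$.

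Your clauses $\Diamond_1(Dp_{s_1}\wedge\Diamond_1(\cdots\Diamond_1(Dp_s\wedge T_\ell\theta)\cdots))$ are existential, and there is a separate one for each $\theta$. In the model given by compactness, different $\theta$ may be witnessed by different $p_s$-marked paths. When you choose the image of $s$ inductively in Step~4, nothing forces that single state to carry all the prescribed values $T_\ell\theta$ at once. Getting one state that realizes infinitely many such conditions is itself a type-realization problem, which is what you are trying to prove.

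The same defect affects Step~2 if ``relativized to $s$'' is read existentially. Then $\Diamond_1Dq_{s,\Gamma}$ and the various $\Box(Nq_{s,\Gamma}\vee D\gamma)$ may hold at different $p_s$-states. The ``functionality'' clauses (``unique up to the $\mathscr L$-theory'') are not a set of $\mathscr L(\tau^+)$-formulas as stated. Step~4 then proposes to bypass them, which is exactly where the argument needs them.

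\textbf{The missing idea: relativize universally, by depth.}
For each $\theta\in\operatorname{Th}_{\mathscr L}(U,s)$, include $\Box^{d(s)}(T_0p_s\vee\theta)$; this is the paper's (E4).
\begin{itemize}
\item It holds at the root of the expansion of $U$, because $s$ is the only state where $p_s\neq 0$.
\item In the compactness model it forces every $p_s$-marked state reachable in $d(s)$ steps, in particular the one you choose, to satisfy all of $\operatorname{Th}_{\mathscr L}(U,s)$.
\item Agreement on undesignated formulas then follows because $N\theta\in\operatorname{Th}_{\mathscr L}(U,s)$ whenever $\theta\notin\operatorname{Th}_{\mathscr L}(U,s)$. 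Exact values follow by Lemma~\ref{lem:recovery}.
\end{itemize}
With this in place, the path clauses shrink to $\Diamond_1^{d(s)}p_s$ and $\Box^{d(s)}(T_0p_s\vee\Diamond_1p_t)$ for $sR_Ut$. These are what drive the inductive choice of $f(t)$. The type clauses become $\Box^{d(s)}(T_0p_s\vee\Diamond_1q_{s,\Gamma})$ and $\Box^{d(s)+1}(T_0q_{s,\Gamma}\vee\gamma)$ for $\gamma\in\Gamma$.

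\textbf{Smaller points.}
\begin{itemize}
\item Injectivity needs the universal clauses $\Box^n(T_0p_s\vee T_0p_t)$ for distinct $s,t$ of depth $n$.
\item You should also verify that $f$ reflects $R$. This follows from uniqueness of immediate predecessors in the unravelling together with injectivity.
\item The set $\{T_1p_s:s\in U\}$ in your $\Delta$, evaluated at the root, asserts $p_s=1$ at the root for every $s$. That is false in your intended expansion of $U$ and contradicts the mutual exclusivity you later need. Only $Dp_w$ should be asserted at the root.
\end{itemize}
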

\begin{proof}
	Let $\tau$ be the signature of $M$, and let $r$ denote the root
	of the unravelling $U=M_w$. For $s\in U$, let $d(s)$ denotes its
	depth. For every $s\in U$, introduce a fresh propositional variable
	$P_s$. Moreover, for every $s\in U$ and every
	$\mathscr L$-successor type $\Gamma$ at $(U,s)$, introduce a
	fresh propositional variable $Q_{s,\Gamma}$. The variables $P_s$ mark the intended images of
	states, while $Q_{s,\Gamma}$ mark witnesses for successor types. Let $\tau^+$ be the
	resulting expansion of $\tau$. For $n<\omega$, put
	$\Box^0\varphi=\varphi,
	\;
	\Box^{n+1}\varphi=\Box(\Box^n\varphi)$, and define $\Diamond_1^n$ analogously. Let $\Sigma\subseteq\mathscr L(\tau^+)$ contain
	\[
	\Diamond_1^{\,d(s)}P_s,
	\quad s\in U,
	\tag{E1}
	\]
	\[
	\Box^n(T_0P_s\vee T_0P_t),
	\quad
	s\neq t,\quad d(s)=d(t)=n,
	\tag{E2}
	\]
	\[
	\Box^{d(s)}
	(T_0P_s\vee\Diamond_1P_t),
	\quad
	sR_Ut,
	\tag{E3}
	\]
	\[
	\Box^{d(s)}
	(T_0P_s\vee\theta),
	\quad
	\theta\in\operatorname{Th}_{\mathscr L}(U,s),
	\tag{E4}
	\]
	\[
	\Box^{d(s)}
	(T_0P_s\vee\Diamond_1Q_{s,\Gamma}),
	\tag{E5}
	\]
	for every $\mathscr L$-successor type $\Gamma$ at $(U,s)$, and
	\[
	\Box^{d(s)+1}
	(T_0Q_{s,\Gamma}\vee\gamma),
	\qquad
	\gamma\in\Gamma.
	\tag{E6}
	\]
	We first show that $\Sigma$ is finitely satisfiable. Let
	$\Sigma_0\subseteq\Sigma$ be finite. We expand the original model $U$ to a
	$\tau^+$-model $U^+$. For every $s\in U$, set
	\[
	V_{U^+}(P_s)(u)=
	\begin{cases}
		1,&u=s,\\
		0,&u\neq s.
	\end{cases}
	\tag{2}
	\]
	For each $Q_{s,\Gamma}$ occurring in $\Sigma_0$, let $\gamma_1,\ldots,\gamma_m$
	be all members of $\Gamma$ which occur in the corresponding
	instances of \textup{(E6)} contained in $\Sigma_0$. Since
	$\Gamma$ is an $\mathscr L$-successor type at $(U,s)$, there is
	some $u_{s,\Gamma}$ such that $sR_Uu_{s,\Gamma}$
	and $U,u_{s,\Gamma}\models_{\mathscr L}\gamma_i
	\quad (1\leq i\leq m)$. If $m=0$, the same conclusion that a successor of $s$ exists
	follows by applying the definition of successor type to the
	finite set $\varnothing$.	
	Put
	\[
	V_{U^+}(Q_{s,\Gamma})(u)=
	\begin{cases}
		1,&u=u_{s,\Gamma},\\
		0,&u\neq u_{s,\Gamma}.
	\end{cases}
	\tag{3}
	\]
	All remaining fresh variables may be interpreted constantly by
	$0$. By invariance under expansion, every
	$\theta\in\mathscr L(\tau)$ has the same value in $U$ and $U^+$. We now verify the relevant clauses. Formula \textup{(E1)} is true at
	the root since $s$ is reachable from $r$ in exactly $d(s)$
	steps and $P_s(s)=1$. For \textup{(E2)}, at a state of depth
	$n$ at most one of $P_s,P_t$ has value $1$; hence at least one
	of $T_0P_s,T_0P_t$ has value $1$. Clause \textup{(E3)} holds
	because, at the unique state $s$ where $P_s=1$, the actual
	successor $t$ witnesses $\Diamond_1P_t$; at all other states
	$T_0P_s=1$. Clause \textup{(E4)} holds because
	$\theta\in\operatorname{Th}_{\mathscr L}(U,s)$ at $s$, while
	$T_0P_s=1$ at all other states. Finally,
	\textup{(E5)}--\textup{(E6)} hold by the choice of
	$u_{s,\Gamma}$. Hence $U^+,r\models_{\mathscr L}\Sigma_0$.
	Thus every finite subset of $\Sigma$ is satisfiable. By
	compactness of $\mathscr L$, there is a pointed
	$\tau^+$-model $(B,b)$ such that
	\[
	B,b\models_{\mathscr L}\Sigma.
	\tag{4}
	\]
Let $B_0=B\upharpoonright\tau$ be the reduct to the original signature, and let
	$K=(B_0)_b$
	be its unravelling from $b$. Let $\pi:K\longrightarrow B_0$
	be the endpoint map. Since $\pi$ is a bisimulation, strong
	bisimulation invariance and reduct invariance give, for every $\theta\in\mathscr L(\tau)$
	and $u\in K$,
	=
	\[
	\llbracket\theta\rrbracket_K^{\mathscr L}(u)
	=\llbracket\theta\rrbracket_{B_0}^{\mathscr L}(\pi(u))
	=
	\llbracket\theta\rrbracket_B^{\mathscr L}(\pi(u)).
	\tag{5}
	\]
	We now construct
	$f:U\longrightarrow K$
	recursively on depth. Since $d(r)=0$, clause \textup{(E1)}
	gives $B,b\models P_r$. Set $f(r)=(b)$. Suppose $f(s)$ has been defined, has depth $d(s)$, and 
	\[
	B,\pi(f(s))\models P_s.
	\tag{6}
	\]
	Let $sR_Ut$. Since $f(s)$ has depth $d(s)$, clause
	\textup{(E3)}, evaluated at its endpoint, yields
	$B,\pi(f(s))\models\Diamond_1P_t$.
	By Lemma~\ref{lem:diamond}, there exists $c\in B$ such that
	$\pi(f(s))R_Bc
	\;\text{and}\;
	B,c\models P_t$. Define $f(t)$ to be the path obtained by appending $c$ to
	$f(s)$. Thus
	\[
sR_Ut
	\quad\Longrightarrow\quad
	f(s)R_Kf(t).
	\tag{7}
	\]
	We next show that $f$ is injective. Suppose $f(s)=f(t)$.
	Then $d(s)=d(t)=n$, and the common endpoint satisfies both
	$P_s=1$ and $P_t=1$. If $s\neq t$, clause \textup{(E2)} gives
	at this endpoint $T_0P_s\vee T_0P_t=1$.
	But $P_s=P_t=1$ implies $T_0P_s=T_0P_t=0$,
	a contradiction. Hence $s=t$, and $f$ is injective. The accessibility relation is also reflected. Suppose
	$f(s)R_Kf(t)$.
	Then $t$ is non-root; let $p(t)$ denote its unique immediate predecessor in
	the tree $U$. By the recursive construction, $f(p(t))R_Kf(t)$.
	Since every non-root node of the unravelling $K$ has a unique
	immediate predecessor,
	$f(s)=f(p(t))$.
	Injectivity of $f$ gives $s=p(t)$,
	and therefore $sR_Ut$.
	Together with \textup{(7)}, we have
	\[
	sR_Ut
	\quad\Longleftrightarrow\quad
	f(s)R_Kf(t).
	\tag{8}
	\]
	It remains to prove $\mathscr L$-elementarity. Let
	$\theta\in\operatorname{Th}_{\mathscr L}(U,s)$.
	By clause \textup{(E4)}, since the endpoint of $f(s)$ has
	depth $d(s)$ and satisfies $P_s=1$,
	$B,\pi(f(s))\models_{\mathscr L}\theta$.
	Using \textup{(5)},
	$K,f(s)\models_{\mathscr L}\theta$.
	Therefore
	\[
	\operatorname{Th}_{\mathscr L}(U,s)
	\subseteq
	\operatorname{Th}_{\mathscr L}(K,f(s)).
	\tag{9}
	\]
	For the converse, let $\theta\notin\operatorname{Th}_{\mathscr L}(U,s)$.
	Then
	$\llbracket\theta\rrbracket_U^{\mathscr L}(s)\neq1$.
	Hence the non-designation test $\mathsf N\theta:=T_0(T_1\theta)$
	has value $1$ at $s$, and consequently
	$\mathsf N\theta
	\in
	\operatorname{Th}_{\mathscr L}(U,s)$.
	By \textup{(9)}, $K,f(s)\models_{\mathscr L}\mathsf N\theta$.
	By the semantics of $\mathsf N$, $\llbracket\theta\rrbracket_K^{\mathscr L}(f(s))\neq1$,
	so $\theta\notin
	\operatorname{Th}_{\mathscr L}(K,f(s))$.
	Thus the reverse inclusion holds, and 
	\[
	\operatorname{Th}_{\mathscr L}(U,s)
	=
	\operatorname{Th}_{\mathscr L}(K,f(s)).
	\tag{10}
	\]
	By Lemma \ref{lem:recovery}, every
	$\theta\in\mathscr L(\tau)$ has the same exact value at $s$ and
	$f(s)$. In particular, atomic valuations are preserved. Together
	with injectivity and \textup{(8)}, this shows that $f$ is an
	$\mathscr L$-elementary Kripke-model embedding. Finally, let $\Gamma$ be an $\mathscr L$-successor type at
	$(U,s)$. Clause \textup{(E5)}, evaluated at the endpoint of
	$f(s)$, gives $B,\pi(f(s))\models
	\Diamond_1Q_{s,\Gamma}$.
	Hence there exists $c$ with $\pi(f(s))R_Bc
	\;\text{and}\;
	B,c\models Q_{s,\Gamma}$.
	Let $u\in K$ be the one-step extension of $f(s)$ obtained by
	appending $c$. Then $f(s)R_Ku$.
	Let $\gamma\in\Gamma$. Since $c$ is reached from $b$ in
	$d(s)+1$ steps, clause \textup{(E6)} gives $B,c\models_{\mathscr L}
	T_0Q_{s,\Gamma}\vee\gamma$.
	Since $Q_{s,\Gamma}$ has value $1$ at $c$, $T_0Q_{s,\Gamma}$
	has value $0$ there. Hence $B,c\models_{\mathscr L}\gamma$.
	Since $\gamma\in\mathscr L(\tau)$, equation
	\textup{(5)} yields $K,u\models_{\mathscr L}\gamma$.
	Since $\gamma\in\Gamma$ was arbitrary, $K,u\models_{\mathscr L}\Gamma$. As $f(s)R_Ku$, the type $\Gamma$ is realized by a successor of $f(s)$ in $K$.
	This completes the proof.
\end{proof}
\begin{prop}[Saturated elementary extension]
	\label{prop:saturated}
	Suppose that $\mathscr L$ is compact, has the Tarski Union
	Property, and is strongly invariant under bisimulation.
	Then every unravelling $M_w$ has an
	$\mathscr L$-elementary $\mathscr L$-saturated extension.
\end{prop}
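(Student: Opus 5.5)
The plan is to iterate Proposition~\ref{prop:realization} $\omega$ times and take the union, using the Tarski Union Property for $\mathscr L$ to keep every stage elementary in the limit. Put $U_0=M_w$. Given an unravelled model $U_n$, Proposition~\ref{prop:realization} supplies an $\mathscr L$-elementary embedding $f_n:U_n\to U_{n+1}$ into an unravelled model $U_{n+1}$ in which every $\mathscr L$-successor type at any $s\in U_n$ is realized by a successor of $f_n(s)$. Replacing each $U_{n+1}$ by an isomorphic copy, which is harmless by isomorphism invariance (v), we may assume $f_n$ is an inclusion. This yields a chain $U_0\preccurlyeq_{\mathscr L}U_1\preccurlyeq_{\mathscr L}\cdots$. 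Each step is a genuine submodel inclusion in the sense of the union construction preceding Proposition~\ref{prop:base-tup}, because (8) in the proof of Proposition~\ref{prop:realization} shows that $R$ is both preserved and reflected, and atomic values are preserved. Let $U_\omega=\bigcup_n U_n$. The Tarski Union Property for $\mathscr L$ gives $U_n\preccurlyeq_{\mathscr L}U_\omega$ for every $n$, and in particular $M_w=U_0\preccurlyeq_{\mathscr L}U_\omega$. So $U_\omega$ is an $\mathscr L$-elementary extension of $M_w$.

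Next I check that $U_\omega$ is $\mathscr L$-saturated. Let $x\in U_\omega$ and let $\Gamma$ be an $\mathscr L$-successor type at $(U_\omega,x)$. Choose $n$ with $x\in U_n$. The key claim is that $\Gamma$ is already a successor type at $(U_n,x)$. By the remark after the definition of successor types, this amounts to $U_\omega,x\models_{\mathscr L}\Diamond_1(\bigwedge\Gamma_0)$ for every finite $\Gamma_0\subseteq\Gamma$. Since $\Diamond_1(\bigwedge\Gamma_0)\in\mathscr L(\tau)$ by closure (iii), and $U_n\preccurlyeq_{\mathscr L}U_\omega$, we get $U_n,x\models_{\mathscr L}\Diamond_1(\bigwedge\Gamma_0)$. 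Lemma~\ref{lem:diamond}, which remains valid for $\mathscr L$, then gives a successor $y\in U_n$ of $x$ with $U_n,y\models_{\mathscr L}\Gamma_0$. Hence $\Gamma$ is a successor type at $(U_n,x)$.

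By construction, $\Gamma$ is then realized in $U_{n+1}$ by some $y$ with $xR_{U_{n+1}}y$. Since $R_{U_{n+1}}\subseteq R_{U_\omega}$, this $y$ is a successor of $x$ in $U_\omega$. Elementarity $U_{n+1}\preccurlyeq_{\mathscr L}U_\omega$ transfers $U_{n+1},y\models_{\mathscr L}\Gamma$ to $U_\omega,y\models_{\mathscr L}\Gamma$. So $\Gamma$ is realized in $U_\omega$, and $U_\omega$ is $\mathscr L$-saturated.

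The main obstacle I expect is bookkeeping in the chain step. Proposition~\ref{prop:realization} is stated for unravellings $M_w$, so I must make sure each $U_{n+1}$ again has the required form. It does, since $K=(B_0)_b$ is an unravelling of a pointed model, and Proposition~\ref{prop:realization} can be reapplied to it; if needed, I would note that an unravelled model is isomorphic to its own unravelling from the root. I also need the identification of $U_n$ with its image to respect the chain ordering and the inherited valuations, so that the union is well defined. A secondary point is that the successor types at $(U_n,x)$ range over all of $\mathscr L(\tau)$, which is exactly what Proposition~\ref{prop:realization} handles, so no restriction on types is required.
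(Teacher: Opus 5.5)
Your proposal is correct and follows the paper's proof essentially step for step. Both iterate Proposition~\ref{prop:realization} to build an $\mathscr L$-elementary chain of unravellings, take the union, and obtain elementarity from the Tarski Union Property. Both then prove saturation by pulling a successor type at $(M_\omega,x)$ back to $(M_n,x)$ via $\Diamond_1(\bigwedge\Gamma_0)$ and elementarity, realizing it in $M_{n+1}$, and transferring back. Your extra bookkeeping (isomorphic copies via invariance (v), (8) making each stage an induced submodel, and reapplying the realization step to unravelled models) spells out what the paper compresses into ``after identifying each model with its $\mathscr L$-elementary image.''
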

\begin{proof}
	Put $M_0:=M_w$.
	Since the model obtained in
	Proposition~\ref{prop:realization} is again an unravelling, we
	may iterate that construction. After identifying each model with
	its $\mathscr L$-elementary image, we obtain a chain
	\[
	M_0
	\preccurlyeq_{\mathscr L}
	M_1
	\preccurlyeq_{\mathscr L}
	M_2
	\preccurlyeq_{\mathscr L}
	\cdots
	\tag{11}
	\]
	such that, for every $n<\omega$, every
	$\mathscr L$-successor type at a state of $M_n$ is realized
	by a successor in $M_{n+1}$. Let $M_\omega:=\bigcup_{n<\omega}M_n$.
	By the Tarski Union Property,
	\[
	M_n\preccurlyeq_{\mathscr L}M_\omega
	\qquad
	\text{for every }n<\omega.
	\tag{12}
	\]
	We show that $M_\omega$ is $\mathscr L$-saturated.
	Let $x\in W_{M_\omega}$, and let $\Gamma\subseteq\mathscr L(\tau)$
	be an $\mathscr L$-successor type at $(M_\omega,x)$.
	Choose $n<\omega$ such that $x\in W_{M_n}$.
	We first show that $\Gamma$ is an
	$\mathscr L$-successor type at $(M_n,x)$.
	Let $\Gamma_0\subseteq\Gamma$
	be finite. Since $\Gamma$ is a successor type at
	$(M_\omega,x)$, Lemma~\ref{lem:diamond} gives
	\[
	M_\omega,x\models_{\mathscr L}
	\Diamond_1\!\left(\bigwedge\Gamma_0\right).
	\tag{13}
	\]
	Since $M_n\preccurlyeq_{\mathscr L}M_\omega$, elementarity gives
	\[
	M_n,x\models_{\mathscr L}
	\Diamond_1\!\left(\bigwedge\Gamma_0\right).
	\tag{14}
	\]
	Applying Lemma~\ref{lem:diamond} again, there exists
	$y_{\Gamma_0}\in W_{M_n}$ such that $xR_{M_n}y_{\Gamma_0}$
	and $M_n,y_{\Gamma_0}\models_{\mathscr L}\Gamma_0$.
	Since this holds for every finite
	$\Gamma_0\subseteq\Gamma$, the set $\Gamma$ is an
	$\mathscr L$-successor type at $(M_n,x)$. By the construction of $M_{n+1}$, there exists
	$z\in W_{M_{n+1}}$ such that $xR_{M_{n+1}}z$
	and
	\[
	M_{n+1},z\models_{\mathscr L}\Gamma.
	\tag{15}
	\]
	Now, by \textup{(12)}, $M_{n+1}\preccurlyeq_{\mathscr L}M_\omega$; hence $M_\omega,z\models_{\mathscr L}\Gamma$. Moreover, $R_{M_\omega}=\bigcup_{n<\omega}R_{M_n}$ so, $xR_{M_{n+1}}z\;\Longrightarrow\; xR_{M_\omega}z$. Thus $\Gamma$ is realized by a successor of $x$ in $M_\omega$.
	Therefore $M_\omega$ is $\mathscr L$-saturated. Finally, $M_w=M_0\preccurlyeq_{\mathscr L}M_\omega$ by \textup{(12)}, so $M_\omega$ is an
	$\mathscr L$-elementary $\mathscr L$-saturated extension of $M_w$.

\end{proof}
\begin{note}
Since $\mathsf{FML}_{\mathcal L}(\tau)\subseteq\mathscr L(\tau)$,
every $\mathsf{FML}_{\mathcal L}$-successor type is also an
$\mathscr L$-successor type. Hence every $\mathscr L$-saturated
model is, in particular,
$\mathsf{FML}_{\mathcal L}$-saturated.
\end{note}
\section{Lindstr\"om maximality}
\label{sec:lindstrom}
The proof of maximality requires two pointed models which agree on
all $\mathsf{FML}_{\mathcal L}$-formulas but disagree on a
hypothetical new formula of the abstract extension. In the finite
MTL-chain argument of Badia and Olkhovikov~\cite{badia2020lindstrom},
the corresponding separation step uses the unique immediate
predecessor of $1$. For a finite Heyting algebra that is not
necessarily a chain, there need not be a greatest proper element
below $1$. We therefore replace this chain-dependent device by the
Boolean-valued pair $D\psi=T_1\psi,\quad N\psi=T_0(T_1\psi)$.
\begin{lem}[Decision lemma]
	\label{lem:decision}
Let $\chi\in\mathscr L(\tau)$ be not $1$-equivalent to any
formula of $\mathsf{FML}_{\mathcal L}(\tau)$, and let $\psi\in\mathsf{FML}_{\mathcal L}(\tau)$.
Then at least one of $\chi\wedge D\psi,
\;
\chi\wedge N\psi$
is not $1$-equivalent to any formula of
$\mathsf{FML}_{\mathcal L}(\tau)$.
\end{lem}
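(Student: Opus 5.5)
We argue by contraposition. Suppose both $\chi\wedge D\psi$ and $\chi\wedge N\psi$ are $1$-equivalent to formulas of $\mathsf{FML}_{\mathcal L}(\tau)$, say $\chi\wedge D\psi\equiv_1\alpha$ and $\chi\wedge N\psi\equiv_1\beta$ with $\alpha,\beta\in\mathsf{FML}_{\mathcal L}(\tau)$. We will show that $\chi\equiv_1\alpha\vee\beta$, and that even a version using the Boolean-valued tests works, contradicting the hypothesis on $\chi$.

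The key step is to compute, at an arbitrary pointed $\tau$-model $(M,x)$, when $\chi\wedge D\psi$ takes value $1$. Since $D\psi$ is $\{0,1\}$-valued by Lemma~\ref{lem:boolean}, and the meet in $\mathcal L$ equals $1$ only when both arguments are $1$, we get $M,x\models_{\mathscr L}\chi\wedge D\psi$ iff $M,x\models_{\mathscr L}\chi$ and $\llbracket\psi\rrbracket_M(x)=1$. Likewise, $M,x\models_{\mathscr L}\chi\wedge N\psi$ iff $M,x\models_{\mathscr L}\chi$ and $\llbracket\psi\rrbracket_M(x)\neq1$. Here we use clause (iv) of the definition of abstract extension, so that $\wedge$, $T_1$ and $T_0$ keep their usual semantics on $\mathscr L$-formulas, together with the fact that $\psi\in\mathsf{FML}_{\mathcal L}$ has the same value under both semantics.

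The point needing care is the disjunction. In a Heyting algebra that is not a chain, $a\vee b=1$ need not imply $a=1$ or $b=1$, so $\alpha\vee\beta$ could take value $1$ where neither disjunct does. To avoid this, we replace the disjuncts by their Boolean tests and take $\delta:=D\alpha\vee D\beta\in\mathsf{FML}_{\mathcal L}(\tau)$. Both disjuncts are $\{0,1\}$-valued, so $\delta$ has value $1$ iff $\alpha$ or $\beta$ has value $1$. By the $1$-equivalences this holds iff $\chi\wedge D\psi$ or $\chi\wedge N\psi$ has value $1$. By the previous paragraph, that happens iff $\chi$ has value $1$ and either $\psi=1$ or $\psi\neq1$, that is, iff $\chi$ has value $1$, since exactly one of these alternatives holds (Lemma~\ref{lem:boolean}).

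Hence $\chi\equiv_1\delta$ with $\delta\in\mathsf{FML}_{\mathcal L}(\tau)$, contradicting the assumption on $\chi$. Using the Boolean tests $D\alpha,D\beta$ rather than $\alpha\vee\beta$ is the only step where the non-linearity of $\mathcal L$ matters, and it is handled entirely by Lemma~\ref{lem:boolean}.
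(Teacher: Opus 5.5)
Your argument is correct and is essentially the paper's proof: the paper likewise assumes $\chi\wedge D\psi\equiv_1\theta_1$ and $\chi\wedge N\psi\equiv_1\theta_0$ and shows $\chi\equiv_1 D\theta_1\vee D\theta_0$. It uses the same two observations you use, namely that $D\psi,N\psi$ split every pointed model in a Boolean way and that the $D$-tests make the disjunction behave classically. The only blemish is your opening sentence announcing $\chi\equiv_1\alpha\vee\beta$; as you yourself point out later, this cannot be justified when $\mathcal L$ is not a chain, so that sentence should simply be dropped.
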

\begin{proof}
	Suppose, towards a contradiction, that there are $\theta_1,\theta_0\in\mathsf{FML}_{\mathcal L}(\tau)$
	such that
	\[
	\chi\wedge D\psi\equiv_1\theta_1,
	\qquad
	\chi\wedge N\psi\equiv_1\theta_0.
	\tag{16}
	\]
	We claim that
	\[
	\chi\equiv_1D\theta_1\vee D\theta_0.
	\tag{17}
	\]
	Let $(M,x)$ be an arbitrary pointed $\tau$-model. Suppose first that $\llbracket\chi\rrbracket_M^{\mathscr L}(x)=1$. By the definitions of $D$ and $N$, $\bigl(
	\llbracket D\psi\rrbracket_M^{\mathscr L}(x),
	\llbracket N\psi\rrbracket_M^{\mathscr L}(x)
	\bigr)
	\in\{(1,0),(0,1)\}$. Hence $\bigl(
	\llbracket\chi\wedge D\psi\rrbracket_M^{\mathscr L}(x),
	\llbracket\chi\wedge N\psi\rrbracket_M^{\mathscr L}(x)
	\bigr)
	\in\{(1,0),(0,1)\}$. By \textup{(16)}, it follows that $\llbracket\theta_1\rrbracket_M^{\mathscr L}(x)=1
	\;\text{or}\;
	\llbracket\theta_0\rrbracket_M^{\mathscr L}(x)=1$.
	Therefore
	$\llbracket D\theta_1\vee D\theta_0
	\rrbracket_M^{\mathscr L}(x)=1$. Conversely, suppose that
	$\llbracket D\theta_1\vee D\theta_0
	\rrbracket_M^{\mathscr L}(x)=1$.
	Since $D\theta_1$ and $D\theta_0$ are Boolean-valued,
	$\llbracket D\theta_1\rrbracket_M^{\mathscr L}(x)=1
	\;\text{or}\;
	\llbracket D\theta_0\rrbracket_M^{\mathscr L}(x)=1$.
	Hence
	$\llbracket\theta_1\rrbracket_M^{\mathscr L}(x)=1
	\;\text{or}\;
	\llbracket\theta_0\rrbracket_M^{\mathscr L}(x)=1$.
	By \textup{(16)}, respectively,
	$\llbracket\chi\wedge D\psi\rrbracket_M^{\mathscr L}(x)=1
	\;\text{or}\;
	\llbracket\chi\wedge N\psi\rrbracket_M^{\mathscr L}(x)=1$.
	In either case,
	$\llbracket\chi\rrbracket_M^{\mathscr L}(x)=1$.
	Thus \textup{(17)} holds. Since
	$D\theta_1\vee D\theta_0\in\mathsf{FML}_{\mathcal L}(\tau)$, this contradicts the assumption on $\chi$.
\end{proof}
\begin{thm}[Lindstr\"om maximality]
	\label{thm:lindstrom}
	Let $\mathcal L$ be a fixed finite Heyting algebra, and let
	$\mathscr L$ be an abstract extension of
	$\mathsf{FML}_{\mathcal L}$. If $\mathscr L$ is compact, has the
	Tarski Union Property, and is strongly invariant under
	bisimulation, then $\mathscr L\equiv_1\mathsf{FML}_{\mathcal L}$.
	Consequently, $\mathsf{FML}_{\mathcal L}$ is maximal with
	respect to $1$-expressive power among its abstract extensions
	satisfying these three properties.
\end{thm}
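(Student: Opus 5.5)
The plan is to prove the nontrivial inclusion $\mathscr L\leq_1\mathsf{FML}_{\mathcal L}$; the inclusion $\mathsf{FML}_{\mathcal L}\leq_1\mathscr L$ is immediate from clause (iii) of the definition of abstract extension. Suppose, towards a contradiction, that some $\chi\in\mathscr L(\tau)$ is not $1$-equivalent to any formula of $\mathsf{FML}_{\mathcal L}(\tau)$. The argument has two stages. First, compactness produces two pointed models that agree on all of $\mathsf{FML}_{\mathcal L}(\tau)$ but disagree on whether $\chi$ has value $1$. Second, saturation and the Hennessy--Milner property turn this agreement into a bisimulation, which contradicts strong bisimulation invariance.

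\emph{Stage 1 (separation by compactness).} I claim there are $(M,x)$ and $(K,y)$ with $\operatorname{Th}_{\mathrm{FML}}(M,x)=\operatorname{Th}_{\mathrm{FML}}(K,y)$, $M,x\models_{\mathscr L}\chi$ and $K,y\not\models_{\mathscr L}\chi$. Suppose no such pair exists.
\begin{enumerate}[(a)]
\item Fix a pointed model $(M,x)$ of $\chi$. Then $\operatorname{Th}_{\mathrm{FML}}(M,x)\cup\{N\chi\}$ is not satisfiable at value $1$: a model of it would be $\mathsf{FML}_{\mathcal L}$-equivalent to $(M,x)$ by Lemma~\ref{lem:recovery} and would falsify $\chi$ by Lemma~\ref{lem:boolean}.
\item By compactness of $\mathscr L$, there is a finite conjunction $\theta_{M,x}$ of members of $\operatorname{Th}_{\mathrm{FML}}(M,x)$ with $\{\theta_{M,x},N\chi\}$ unsatisfiable. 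Since a finite meet of copies of $1$ is $1$, we have $M,x\models\theta_{M,x}$, and every model of $\theta_{M,x}$ satisfies $\chi$.
\item Now consider the set $\{\chi\}\cup\{N\theta_{M,x}:(M,x)\models_{\mathscr L}\chi\}$. It is a set because formulas form a set. It is unsatisfiable, since any model of $\chi$ falsifies its own $N\theta_{M,x}$.
\item By compactness again, there are finitely many pointed models $(M_i,x_i)$, $i\le k$, such that $\{\chi,N\theta_1,\dots,N\theta_k\}$ is unsatisfiable.
\item Hence $\chi\equiv_1 D\theta_1\vee\cdots\vee D\theta_k$. The left-to-right direction follows from (d) and Lemma~\ref{lem:boolean}, and the right-to-left direction from the choice of each $\theta_i$ in (b).
\end{enumerate}
This contradicts the hypothesis on $\chi$, since the right-hand side lies in $\mathsf{FML}_{\mathcal L}(\tau)$. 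The Decision Lemma~\ref{lem:decision} encodes the same Boolean splitting, but the direct argument above seems cleaner. I expect Stage 1 to be the main obstacle. The points needing care are that $N\chi$ and $N\theta$ are available in $\mathscr L(\tau)$, and that "unsatisfiable together with $N\chi$" really yields $1$-entailment of $\chi$; both rest on the Boolean character of $D$ and $N$.

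\emph{Stage 2 (bisimulation contradiction).} Replace $(M,x)$ and $(K,y)$ by their unravellings $M_x$ and $K_y$. The endpoint maps are bisimulations, so by strong invariance and Proposition~\ref{prop:base-bisim} all $\mathscr L$- and $\mathsf{FML}_{\mathcal L}$-values at the roots are unchanged. By Proposition~\ref{prop:saturated}, take $\mathscr L$-elementary $\mathscr L$-saturated extensions $M^*$ and $K^*$. Elementarity for $\mathscr L$ includes elementarity for $\mathsf{FML}_{\mathcal L}\subseteq\mathscr L$, and by Lemma~\ref{lem:recovery} it preserves exact values. So the roots still agree on every $\mathsf{FML}_{\mathcal L}$-formula, while $\chi$ still has value $1$ at the root of $M^*$ and a value different from $1$ at the root of $K^*$. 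By the Note following Proposition~\ref{prop:saturated}, both models are $\mathsf{FML}_{\mathcal L}$-saturated. Proposition~\ref{prop:hm} then yields a bisimulation between the two roots, and strong bisimulation invariance of $\mathscr L$ forces equal values of $\chi$, a contradiction.

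Finally, the maximality clause of the theorem is just a restatement of $\mathscr L\equiv_1\mathsf{FML}_{\mathcal L}$ for every such extension.
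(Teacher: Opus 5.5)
Your proposal is correct. Stage~2 is exactly the paper's argument: unravel, pass to $\mathscr L$-elementary $\mathscr L$-saturated extensions by Proposition~\ref{prop:saturated}, apply Proposition~\ref{prop:hm}, and contradict strong invariance. Stage~1, however, follows a genuinely different route.

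\textbf{The paper's route.} The paper reduces to a finite occurrence signature, so that $\mathsf{FML}_{\mathcal L}(\tau)$ is countable, and enumerates it as $\psi_1,\psi_2,\dots$. It then applies the Decision Lemma~\ref{lem:decision} recursively to choose $\delta_n\in\{D\psi_n,N\psi_n\}$ so that $\varphi\wedge F_n$ stays non-definable. Non-definability forces both $\varphi\wedge F_n$ and $N\varphi\wedge F_n$ to be satisfiable; otherwise $\varphi\wedge F_n\equiv_1 0$ or $\varphi\wedge F_n\equiv_1 F_n$. Compactness applied to $\Gamma^{+}$ and $\Gamma^{-}$ then produces two models realizing one and the same complete Boolean $\mathsf{FML}_{\mathcal L}$-type.

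\textbf{Your route.} You use the classical double-compactness argument. Compactness at each model of $\chi$ yields one $\mathsf{FML}_{\mathcal L}$-formula true there that forces $\chi$. A second, global application yields $\chi\equiv_1 D\theta_1\vee\cdots\vee D\theta_k$.

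\textbf{What each buys.} Your version needs neither the enumeration (so no countability or finite-signature reduction) nor the Decision Lemma. It also isolates the compactness half of the Lindstr\"om argument as a self-contained definability lemma: if the $1$-extension of $\chi$ is closed under equality of $\mathsf{FML}_{\mathcal L}$-theories, then $\chi$ is $1$-equivalent to an $\mathsf{FML}_{\mathcal L}$-formula. The paper's version instead builds the separating type explicitly as a branch of a binary decision tree, and packages the non-linear separation step in a standalone lemma. Both rest on the same observation: in a non-linear $\mathcal L$, $a\vee b=1$ need not force $a=1$ or $b=1$, so one must disjoin the tests $D\theta_i$ rather than the $\theta_i$ themselves.

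\textbf{Two small points to tidy.}
\begin{itemize}
\item In (a), Lemma~\ref{lem:recovery} is stated for equal theories. Note that $\operatorname{Th}_{\mathrm{FML}}(M,x)\subseteq\operatorname{Th}_{\mathrm{FML}}(K,y)$ already gives equality: if $\psi$ has value $\ell$ at $(M,x)$, then $T_\ell\psi$ transfers to $(K,y)$.
\item In (c), you choose $\theta_{M,x}$ for every member of a proper class of pointed models. Instead, take $\{\chi\}\cup\{N\theta:\theta\in\mathsf{FML}_{\mathcal L}(\tau)\text{ and }\{\theta,N\chi\}\text{ is unsatisfiable}\}$. This set is unsatisfiable by (b), with no appeal to global choice.
\end{itemize}
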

\begin{proof}
	Suppose, towards a contradiction, that $\mathscr L\not\leq_1\mathsf{FML}_{\mathcal L}$.
	Then there is a formula $\varphi$ of $\mathscr L$ which is not
	$1$-equivalent to any formula of
	$\mathsf{FML}_{\mathcal L}$. By the finite occurrence property, there is a finite signature
	$\tau$ such that $\varphi\in\mathscr L(\tau)$. Since both
	$\tau$ and $\mathcal L$ are finite and every formula is a finite
	expression, $\mathsf{FML}_{\mathcal L}(\tau)$ is countable. Fix an enumeration $\psi_1,\psi_2,\ldots$
	of all its formulas. Put $F_0:=1$.
	We recursively choose $\delta_n\in\{D\psi_n,N\psi_n\}$,
	and define $F_n:=\bigwedge_{i=1}^{n}\delta_i$,
	so that
	\[
	\varphi\wedge F_n
	\tag{18}
	\]
	is not $1$-equivalent to any formula of
	$\mathsf{FML}_{\mathcal L}(\tau)$. For $n=0$ this follows from
	the choice of $\varphi$. Suppose that $\delta_1,\ldots,\delta_{n-1}$ have 
	been chosen and that $\varphi\wedge F_{n-1}$
	is not $1$-equivalent to any
	$\mathsf{FML}_{\mathcal L}(\tau)$-formula. Applying Lemma~\ref{lem:decision} to $\chi=\varphi\wedge F_{n-1}
	\;\text{and}\;
	\psi=\psi_n$,
	at least one of $(\varphi\wedge F_{n-1})\wedge D\psi_n,
	\;
	(\varphi\wedge F_{n-1})\wedge N\psi_n$
	is not $1$-equivalent to any formula of
	$\mathsf{FML}_{\mathcal L}(\tau)$. Choose $\delta_n$ accordingly.
	This completes the recursion. We next establish two satisfiability facts. First,
	\[
	\varphi\wedge F_n
	\tag{19}
	\]
	is satisfiable for every $n\geq 0$. Otherwise $\llbracket\varphi\wedge F_n\rrbracket_M^{\mathscr L}(x)\neq1$ for every pointed $\tau$-model $(M,x)$, and hence $\varphi\wedge F_n\equiv_1 0$,
	contrary to \textup{(18)}. Second, we claim that
	\[
	N\varphi\wedge F_n
	\tag{20}
	\]
	is also satisfiable for every $n\geq 0$. Suppose otherwise.
	Then, for every pointed $\tau$-model $(K,x)$, $K,x\models F_n
	\;\Longrightarrow\;
	K,x\not\models_{\mathscr L}N\varphi$. Since $N\varphi=T_0(T_1\varphi)$ is Boolean-valued,
	$K,x\models F_n
	\;\Longrightarrow\;
	\llbracket N\varphi\rrbracket_K^{\mathscr L}(x)=0
	\;\Longrightarrow\;
	\llbracket\varphi\rrbracket_K^{\mathscr L}(x)=1$.
	Therefore
	$K,x\models F_n
	\;\Longrightarrow\;
	K,x\models_{\mathscr L}\varphi\wedge F_n$.

	The converse implication is immediate, so $\varphi\wedge F_n\equiv_1 F_n$. Since $F_n\in\mathsf{FML}_{\mathcal L}(\tau)$,
	this contradicts \textup{(18)}. Thus \textup{(20)} is satisfiable. Now define $\Gamma^+
	=
	\{\varphi\}\cup\{\delta_n:n\geq1\}$
	and $\Gamma^-
	=
	\{N\varphi\}\cup\{\delta_n:n\geq1\}$.
	Both sets are finitely satisfiable. Indeed, let
	$\Delta\subseteq\Gamma^+$ be finite, and choose $m$ at least as
	large as every index of a decision formula occurring in
	$\Delta$ (taking $m=0$ if none occurs). A pointed model satisfying $\varphi\wedge F_m$
	also satisfies $\Delta$. Similarly, satisfiability of $N\varphi\wedge F_m$,
	shows that every finite subset of $\Gamma^-$ is satisfiable. By compactness of $\mathscr L$, there are pointed $\tau$-models
	$(M,w)$ and $(K,v)$ such that $M,w\models_{\mathscr L}\Gamma^+,
	\;
	K,v\models_{\mathscr L}\Gamma^-$.
	Consequently,
	\[
	\llbracket\varphi\rrbracket_M^{\mathscr L}(w)=1,
	\qquad
	\llbracket\varphi\rrbracket_K^{\mathscr L}(v)\neq1.
	\tag{21}
	\]
	We now show that, for every
	$\psi\in\mathsf{FML}_{\mathcal L}(\tau)$, $M,w\models\psi
	\Longleftrightarrow
	K,v\models\psi$.
	
	Fix $n\geq1$. Since $\delta_n\in\{D\psi_n, N\psi_n\}$, and both $(M,w)$ and $(K,v)$ satisfy $\delta_n$, there are two cases. If$\delta_n=D\psi_n$, then 
$	M,w\models\psi_n
	\; \text{and}\;
	K,v\models\psi_n$. If $\delta_n=N\psi_n$, then 
	$M,w\not\models\psi_n
	\; \text{and}\;
	K,v\not\models\psi_n$. Therefore, 
	$M,w\models\psi_n
	\Longleftrightarrow
	K,v\models\psi_n$, for every $n\geq1$.

	
	 Hence
	\[
	\operatorname{Th}_{\mathsf{FML}_{\mathcal L}}(M,w)
	=
	\operatorname{Th}_{\mathsf{FML}_{\mathcal L}}(K,v).
	\tag{22}
	\]
	By exact-value recovery, Lemma \ref{lem:recovery}) applied to
	$\mathsf{FML}_{\mathcal L}$ yields
	\[
	\llbracket\psi\rrbracket_M(w)
	=
	\llbracket\psi\rrbracket_K(v)
	\tag{23}
	\]
	for every $\psi\in\mathsf{FML}_{\mathcal L}(\tau)$. Let $U=M_w,
	\;
	V=K_v$
	be the unravellings of the two pointed models, with roots
	$r_U$ and $r_V$, respectively. Their endpoint maps are
	bisimulations. Therefore strong bisimulation invariance of
	$\mathscr L$ preserves the disagreement in \textup{(21)}, so
	\[
	\llbracket\varphi\rrbracket_U^{\mathscr L}(r_U)=1,
	\qquad
	\llbracket\varphi\rrbracket_V^{\mathscr L}(r_V)\neq1.
	\tag{24}
	\]
	By Proposition \ref{prop:base-bisim} and \textup{(23)}, 
	\[
	\llbracket\psi\rrbracket_U(r_U)
	=
	\llbracket\psi\rrbracket_V(r_V)
	\tag{25}
	\]
	for every $\psi\in\mathsf{FML}_{\mathcal L}(\tau)$. By Proposition~\ref{prop:saturated}, take
	$\mathscr L$-elementary $\mathscr L$-saturated extensions $U\preccurlyeq_{\mathscr L}\widehat M,
	\;
	V\preccurlyeq_{\mathscr L}\widehat K$.
	Let $\widehat r_U$ and $\widehat r_V$ denote the images of the
	two roots. Since these embeddings are $\mathscr L$-elementary,
	Lemma~\ref{lem:recovery} gives preservation of the exact values of all
	$\mathscr L(\tau)$-formulas. Hence \textup{(24)} becomes
	\[
	\llbracket\varphi\rrbracket_{\widehat M}^{\mathscr L}
	(\widehat r_U)=1,
	\qquad
	\llbracket\varphi\rrbracket_{\widehat K}^{\mathscr L}
	(\widehat r_V)\neq1,
	\tag{26}
	\]
	while \textup{(25)} yields
	\[
	\llbracket\psi\rrbracket_{\widehat M}(\widehat r_U)
	=
	\llbracket\psi\rrbracket_{\widehat K}(\widehat r_V)
	\tag{27}
	\]
	for every
	$\psi\in\mathsf{FML}_{\mathcal L}(\tau)$. Since every $\mathscr L$-saturated model is, in particular,
	$\mathsf{FML}_{\mathcal L}$-saturated,
	Proposition~\ref{prop:hm} applied to \textup{(27)} gives $(\widehat M,\widehat r_U)
	\sim
	(\widehat K,\widehat r_V)$.
	Strong bisimulation invariance of $\mathscr L$ therefore implies $\llbracket\varphi\rrbracket_{\widehat M}^{\mathscr L}
	(\widehat r_U)
	=
	\llbracket\varphi\rrbracket_{\widehat K}^{\mathscr L}
	(\widehat r_V)$,
	contradicting \textup{(26)}. 
	Hence $\mathscr L\leq_1\mathsf{FML}_{\mathcal L}$.Since
	$\mathsf{FML}_{\mathcal L}\subseteq\mathscr L$, the reverse $1$-expressive inequality
	$\mathsf{FML}_{\mathcal L}\leq_1\mathscr L$
	is immediate.
	Hence $\mathscr L\equiv_1\mathsf{FML}_{\mathcal L}$.
\end{proof}

\begin{cor}[Exact-value definability]
	\label{cor:value-fibres}
Under the assumptions of Theorem~\ref{thm:lindstrom}, for every $\varphi\in\mathscr L(\tau)
\;\text{and}\;
\ell\in L$,
there exists a formula $\theta_{\varphi,\ell}
\in\mathsf{FML}_{\mathcal L}(\tau)$
such that, for every pointed $\tau$-model $(M,x)$, $M,x\models\theta_{\varphi,\ell}
\Longleftrightarrow
\llbracket\varphi\rrbracket_M^{\mathscr L}(x)=\ell$.
\end{cor}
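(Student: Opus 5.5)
The plan is to apply Theorem~\ref{thm:lindstrom} not to $\varphi$ itself but to the exact truth test $T_\ell\varphi$. By clause (iii) of the definition of an abstract extension, $\mathscr L(\tau)$ is closed under $T_\ell$, so $T_\ell\varphi\in\mathscr L(\tau)$. Theorem~\ref{thm:lindstrom} gives $\mathscr L\leq_1\mathsf{FML}_{\mathcal L}$. Hence there is $\theta_{\varphi,\ell}\in\mathsf{FML}_{\mathcal L}(\tau)$ with $T_\ell\varphi\equiv_1\theta_{\varphi,\ell}$.

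Next I unfold what $\equiv_1$ means here. For every pointed $\tau$-model $(M,x)$ we have
\[
M,x\models\theta_{\varphi,\ell}\iff M,x\models_{\mathscr L}T_\ell\varphi .
\]
By clause (iv), $\llbracket T_\ell\varphi\rrbracket^{\mathscr L}_M(x)=T_\ell\bigl(\llbracket\varphi\rrbracket^{\mathscr L}_M(x)\bigr)$. This has value $1$ exactly when $\llbracket\varphi\rrbracket^{\mathscr L}_M(x)=\ell$. Chaining the two equivalences gives the required biconditional.

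There is also one semantic bookkeeping point. The value of $\theta_{\varphi,\ell}$ under $\mathsf{FML}_{\mathcal L}$ must agree with its $\mathscr L$-value. This holds because, by (iii)--(iv), the connectives keep their usual semantics in $\mathscr L$, so a routine induction identifies $\llbracket\theta\rrbracket_M$ with $\llbracket\theta\rrbracket^{\mathscr L}_M$ for base formulas. I expect no real obstacle: the only substantive input is the theorem, and the key observation is that $1$-equivalence of $T_\ell\varphi$ already captures the whole $\ell$-fibre.

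If desired, one can optionally replace $\theta_{\varphi,\ell}$ by the Boolean-valued $D\theta_{\varphi,\ell}$. This would make the defining formula take only the values $0$ and $1$.
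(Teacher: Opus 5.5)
Your proposal is correct and matches the paper's proof: apply Theorem~\ref{thm:lindstrom} to $T_\ell\varphi\in\mathscr L(\tau)$ to get $\theta_{\varphi,\ell}\in\mathsf{FML}_{\mathcal L}(\tau)$ with $T_\ell\varphi\equiv_1\theta_{\varphi,\ell}$, then unfold the semantics of $T_\ell$. Your extra bookkeeping remark, that the $\mathsf{FML}_{\mathcal L}$-value of $\theta_{\varphi,\ell}$ equals its $\mathscr L$-value, is used only implicitly in the paper. Note that the atomic and constant cases of that induction rest on $\mathscr L$ being an extension of $\mathsf{FML}_{\mathcal L}$, not on clause (iv) alone.
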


\begin{proof}
	Since $\mathscr L$ is closed under exact truth tests, $T_\ell\varphi\in\mathscr L(\tau)$.
	By Theorem~\ref{thm:lindstrom}, there exists $\theta_{\varphi,\ell}
	\in\mathsf{FML}_{\mathcal L}(\tau)$
	such that $T_\ell\varphi\equiv_1\theta_{\varphi,\ell}$.
	Hence, for every pointed $\tau$-model $(M,x)$,
		$M,x\models\theta_{\varphi,\ell}
		\Longleftrightarrow
		M,x\models_{\mathscr L}T_\ell\varphi
		\Longleftrightarrow
		\llbracket T_\ell\varphi\rrbracket_M^{\mathscr L}(x)=1
		\Longleftrightarrow
		\llbracket\varphi\rrbracket_M^{\mathscr L}(x)=\ell$,
	where the last equivalence follows from the definition of
	$T_\ell$.
\end{proof}
\begin{rem}
	Corollary~\ref{cor:value-fibres} provides, separately for each
	$\ell\in L$, an $\mathsf{FML}_{\mathcal L}(\tau)$-definition of the
	exact-value fibre $\{(M,x):
	\llbracket\varphi\rrbracket_M^{\mathscr L}(x)=\ell\}$.
	It does not assert the existence of a single
	$\theta_\varphi\in\mathsf{FML}_{\mathcal L}(\tau)$ such that $\llbracket\theta_\varphi\rrbracket_M(x)
	=
	\llbracket\varphi\rrbracket_M^{\mathscr L}(x)$
	for every pointed $\tau$-model $(M,x)$.
\end{rem}

\section{Conclusion and future work}
\label{sec:conclusion}
We have established a Lindstr\"om-style maximality theorem for
Maruyama's exact-truth-test presentation of Fitting's modal logic
over a fixed finite Heyting algebra $\mathcal L$, without assuming
linearity. This extends the finite-chain result of Badia and
Olkhovikov~\cite{badia2020lindstrom}: the Boolean-valued tests $D\varphi=T_1\varphi,
\;N\varphi=T_0(T_1\varphi)$
replace the chain-dependent separation mechanism and support the
saturation and maximality arguments. Thus
$\mathsf{FML}_{\mathcal L}$ is maximal with respect to
$1$-expressive power among its compact abstract extensions having
the Tarski Union Property and strong bisimulation invariance.
Moreover, Corollary~\ref{cor:value-fibres} shows that each
exact-value fibre of an extension formula is definable by an
$\mathsf{FML}_{\mathcal L}$-formula.

Several questions remain open. A first question is whether the
Tarski Union Property can be replaced by a suitable
relativization or finite-depth condition, as in other modal
Lindstr\"om characterizations
\cite{deRijke1995,badia2020lindstrom}. A second direction is to
replace crisp accessibility by genuinely $\mathcal L$-valued
accessibility; this would require a different analysis of
bisimulation, successor types, and the derived existential
construction.\\
A further question concerns the precise algebraic role of the
exact truth tests: which finite truth-value algebras, or which
weaker families of value-separating operations, support the
Boolean separation and exact-value recovery required by the
maximality argument? Such a characterization would isolate the
algebraic content of the proof from the particular
exact-truth-test presentation used here. It would also be interesting to investigate whether the present exact-test maximality argument admits a
coalgebraic formulation, in connection with many-valued coalgebraic logics over semi-primal varieties~\cite{kurz2024many}.
Finally, it remains open whether the separate exact-value definitions supplied by Corollary~\ref{cor:value-fibres} can be replaced by a single
$\mathsf{FML}_{\mathcal L}$-formula reproducing the full $\mathcal L$-valued semantics of an arbitrary extension formula.

\section*{Funding}
This research received no external funding.
\section*{Conflict of Interest Statement}
The authors declare that they have no known competing financial interests or personal relationships that could have appeared to influence the work reported in this paper.
\section*{Data Availability Statement}
Data sharing is not applicable to this article as no new data were created or analyzed in this study.

\end{document}